\documentclass[11pt]{article}

\usepackage[margin=1in]{geometry}
\usepackage{parskip}
\usepackage{amsmath, amssymb, amsthm}
\usepackage{booktabs}
\usepackage{hyperref}
\usepackage{natbib}
\usepackage{xcolor}
\usepackage{multirow}
\usepackage{tikz}
\usetikzlibrary{positioning, arrows.meta, shapes.geometric, fit, backgrounds, calc}

\newtheorem{theorem}{Theorem}

\newcommand{\E}{\mathbb{E}}
\newcommand{\Var}{\mathrm{Var}}
\newcommand{\Cov}{\mathrm{Cov}}
\newcommand{\Prob}{\mathbb{P}}
\newcommand{\R}{\mathbb{R}}
\newcommand{\ind}{\mathbf{1}}
\newcommand{\bfbeta}{\boldsymbol{\beta}}
\newcommand{\bfX}{\mathbf{X}}
\newcommand{\bfXs}{\mathbf{X}^*}

\title{Bias-corrected Cox regression with AI-extracted covariates via calibration summary statistics}

\usepackage{authblk}

\author{Arjun Sondhi}
\affil{Northwell Health}

\date{\today}

\begin{document}
\maketitle

\begin{abstract}
Large-scale observational studies increasingly rely on AI pipelines to extract structured variables from unstructured clinical records. 
A common workflow separates the data vendor, who validates extraction accuracy with a gold-standard sample, from the downstream researcher, who receives only the extracted dataset and summary accuracy statistics. 
We develop a bias-correction framework for the Cox proportional hazards model when covariates are subject to AI extraction error. 
Within a unified multivariate calibration framework, we show that the naive Cox estimator's bias decomposes into a leading-order calibration term and a second-order residual that vanishes as extraction accuracy improves. 
The leading-order term yields a corrected estimator that operates as a post-hoc matrix multiplication on the output of any standard Cox software. 
We further derive bias-adjusted confidence intervals that incorporate calibration uncertainty and a sensitivity diagnostic for assessing whether the neglected residual could materially affect inference. 
Synthetic data experiments with cross-dependent extraction errors and controlled nonlinear calibration violations confirm that the correction substantially reduces bias and achieves near-nominal coverage even under mild violations of the linear calibration assumption. 
The framework yields a concrete reporting specification: a short list of summary statistics that data vendors should provide alongside any AI-extracted covariate dataset used in survival analysis.
\end{abstract}

\section{Introduction}
\label{sec:intro}

The growing availability of electronic health records, clinical notes, and registry databases has enabled observational studies at unprecedented scale. 
Increasingly, AI-based extraction tools---including large language models (LLMs) and other natural language processing (NLP) pipelines---automate the extraction of structured analytic variables from unstructured source data, and a growing number of commercial data vendors offer AI-extracted clinical datasets as research products \citep{adamson2023approach, agrawal2023development}.

A common workflow has emerged: a data vendor develops an AI extraction pipeline, validates it against manually abstracted gold-standard labels, and delivers the extracted dataset to end-user researchers. 
The researchers receive the extracted variables but \emph{not} the paired validation data.
Instead, what the vendors often provide alongside the extracted data is a set of summary statistics characterizing extraction accuracy.
These often include sensitivity and specificity for categorical variables, while continuous variable error summaries may include mean absolute error or root mean squared error \citep{estevez2026ensuring}.
While these metrics are useful for broadly assessing the accuracy of the AI extraction task, they do not directly address how reliable downstream statistical analyses are when applied to the AI-extracted dataset. 

It is well established that using error-prone variables in statistical models (whether from manual abstraction mistakes, instrument noise, or algorithmic extraction) can produce systematic distortions in parameter estimates.
AI-extracted covariates introduce a modern instance of this classical problem: the extraction algorithm acts as a noisy measurement device, and the resulting surrogates inherit error patterns that are shaped by the algorithm's architecture, training data, and the complexity of the source text.
We focus on the Cox model because of its central role in clinical research: it is the default analysis model for time-to-event endpoints, and regulatory submissions routinely rely on Cox-based hazard ratio estimates for efficacy evaluation.
The central question is:
\begin{quote}
\emph{Given accuracy summaries reported by the data vendor, can a downstream researcher correct for the bias in a Cox proportional hazards model fit to AI-extracted covariates and conduct valid inference on the corrected estimates?}
\end{quote}

This paper focuses on the setting where covariate extraction is the primary source of measurement error: the event time $T$ and event indicator $\Delta$ are assumed to be observed correctly (e.g., because they come from administrative or registry sources that require no language extraction). 

Our main contributions are as follows. 
First, we derive a closed-form bias decomposition (Theorem~\ref{thm:cov_bias}) that expresses the difference $\bfbeta - \bfbeta^*$ between the true Cox coefficients and the probability limit of the naive estimator in terms of a \emph{leading-order calibration correction} and a \emph{second-order regression calibration (RC) residual}. 
The leading-order term depends only on the multivariate calibration slope matrix $\mathbf{B}$ and the naive estimate $\hat{\bfbeta}^*$, both of which are available to the researcher, yielding a fully computable corrected estimator $\hat{\bfbeta}_{\mathrm{corr}} = (\hat{\mathbf{B}}^\top)^{-1}\hat{\bfbeta}^*$.
Second, we develop a framework for bias-adjusted inference (Section~\ref{sec:bias_ci}) that propagates both sampling variability in $\hat{\bfbeta}^*$ and estimation uncertainty in $\hat{\mathbf{B}}$ through the correction. 
Crucially, the correction is \emph{analysis-flexible}: it operates on the output of any standard Cox software, requiring no modifications to the partial likelihood fitting procedure itself, and accommodates arbitrary downstream contrasts (e.g., individual coefficients, linear combinations, or joint tests).
Third, the bias decomposition cleanly separates quantities into those reported by the vendor and those computed by the researcher, yielding a concrete \emph{reporting specification} of which accuracy summaries should accompany any AI-extracted covariate dataset used in survival analysis (Table~\ref{tab:stats}).

The rest of this paper is organized as follows. Section~\ref{sec:related} reviews the measurement error, post-prediction inference, and real-world data quality assurance literature, motivating the need for our work. 
Section~\ref{sec:bias} develops the methodology: Section~\ref{sec:setup} fixes notation and states the working assumptions, Section~\ref{sec:bias_cov} presents the bias decomposition of Theorem~\ref{thm:cov_bias} and the resulting corrected estimator, and Section~\ref{sec:bias_ci} derives plug-in and propagated-variance confidence intervals along with a sensitivity diagnostic for the neglected residual. 
Section~\ref{sec:sims} reports simulation studies under cross-dependent extraction errors (Section~\ref{sec:cross-dependent}) and under controlled violations of the linear calibration assumption at constant total error (Section~\ref{sec:nonlinear}).
Section~\ref{sec:discussion} concludes with a discussion of the vendor reporting standard and the framework's limitations.
Appendix~\ref{app:proofs} contains the proof of Theroem~\ref{thm:cov_bias}, Appendix~\ref{app:prop-ci-cvg} examines how confidence interval coverage converges with the validation sample size, and Appendix~\ref{app:mimic} presents a semi-synthetic analysis of MIMIC-IV data.

\subsection*{Generalizable Insights about Machine Learning in the Context of Healthcare}

As AI extraction pipelines become standard infrastructure for large-scale clinical research, a recurring structural problem emerges: the team that builds and validates the extraction model is typically not the team that conducts the downstream analysis, and the validation data rarely travel with the extracted dataset. 
\begin{enumerate}
    \item  We show that given appropriately chosen summary statistics from the validation stage, a downstream analyst can recover bias-corrected point estimates and valid confidence intervals from standard software output, without access to paired validation data or specialized estimation procedures.
    \item The approach establishes a broader design principle: for any downstream analysis where ML-generated variables feed into a formal statistical procedure, one can ask what minimal summary statistics from the ML validation stage are sufficient for inferential correction, and design reporting standards around that answer.
    \item We believe this shift from reporting extraction accuracy as an end in itself to reporting quantities that enable valid downstream inference is a principled design framework for the growing ecosystem of AI-extracted clinical research data.
\end{enumerate}

\section{Related Work}
\label{sec:related}

\paragraph{Measurement error in regression models.}
The statistical theory of measurement error is mature, with comprehensive treatments in the monographs of \citet{fuller2009} and \citet{carroll2006}. 
For linear models, additive covariate error with known variance produces the classical attenuation factor $\sigma_X^2 / (\sigma_X^2 + \sigma_U^2)$, and regression calibration provides a consistent correction. 
For generalized linear models, the induced bias depends on the link function, and regression calibration is only approximately consistent, with the quality of the approximation depending on the curvature of the link and the magnitude of the measurement error \citep{carroll2006}.

\paragraph{Measurement error in Cox models.}
Regression calibration for the Cox model was formalized by \citet{wang1997rc}, who showed that replacing the error-prone covariate with its conditional expectation given the surrogate is only approximately consistent because the induced hazard function does not satisfy proportional hazards; \citet{xie2001} proposed a risk-set calibration variant that recalibrates at each event time. 
Simulation extrapolation (SIMEX) was adapted to survival analysis and extended to handle mixtures of Berkson and classical errors \citep{tapsoba2019simex, oh2018considerations}. 
Bayesian approaches have also been proposed, using parametric models for the baseline hazard to enable full posterior inference over the measurement error structure \citep{bartlett2018bayesian}.
\citet{wangsong2021} provided a unified semiparametric regression calibration framework for general hazard models, characterizing the RC inconsistency in terms of the calibration residual variance and the baseline hazard. 
A common thread across all of these methods is that the researcher is assumed to have direct access to either a validation dataset, replicate measurements, or a known error distribution.

\paragraph{Post-prediction inference.}
A separate and more recent line of work addresses the problem of conducting valid statistical inference when some variables (typically outcomes) are predicted by a machine learning model rather than directly observed. 
\citet{wang2020postpi} formalized the \emph{post-prediction inference} problem, showing that naively substituting predicted outcomes into regression models produces biased estimates and anticonservative standard errors, and proposed corrections based on modeling the relationship between predicted and true outcomes in a labeled testing set. 
This was applied in the Cox regression setting by \citet{sondhi2023postprediction}.
\citet{angelopoulos2023ppi} extended this to \emph{prediction-powered inference} (PPI), a framework that provides valid inference for any estimand defined through an estimating equation. 
\citet{miao2025assumption} proposed the PoSt-Prediction Adaptive (PSPA) inference framework, which achieves element-wise variance reduction relative to using only the labeled data, and established connections to semiparametric efficiency theory. 
These methods assume the researcher has access to a labeled dataset in which both the true and predicted values are jointly observed; the correction is computed from this paired data. 
Moreover, the methods have been developed primarily for estimands defined through smooth M-estimation problems (means, quantiles, GLM coefficients), and their extension to the Cox partial likelihood (which involves risk-set-dependent weights and a nonparametric baseline hazard) is not immediate. 

\paragraph{Quality-assurance frameworks for AI-extracted clinical data.}
A parallel applied literature specifies how vendors should certify RWD quality.
Checklists such as PALISADE \citep{padula2022machine} and SUITABILITY \citep{fleurence2024assessing} target
transparency and fitness-for-purpose, but were not built specifically for extracted data quality.
Closest to our setting is the VALID framework \citep{estevez2026ensuring}, which validates LLM-extracted oncology datasets through three pillars: variable-level accuracy metrics; verification checks for conformance, plausibility, and internal consistency; and replication analyses comparing an analysis run on extracted data against a human-abstracted reference.
VALID is the most complete such specification we are aware of, but fails to yield a statement about the reliability of a downstream effect estimate.
Its accuracy metrics are marginal, and marginal metrics are not sufficient statistics for Cox bias: two pipelines with identical recall, precision, and $F_1$ can induce different calibration slope matrices and hence different asymptotic bias in $\hat{\bfbeta}^*$ (Theorem~\ref{thm:cov_bias}). 
Agreement on a marginal survival curve, as in VALID's case study, does not certify adjusted coefficients in a multivariable Cox model, where extraction errors across several covariates interact.
Verification checks are diagnostic rather than corrective.
The replication pillar is the most inferentially ambitious, yet it requires the vendor to hold the paired data and to run the analysis, so its guarantee does not transport to the analyses a downstream researcher actually specifies.

\section{Estimating Cox Model Bias from Summary Statistics}
\label{sec:bias}

\subsection{Setup and Notation}
\label{sec:setup}

Let $T$ denote the event time, $\Delta$ the event indicator ($\Delta = 1$ if the event is observed, $\Delta = 0$ if censored), and $\bfX = (X_1, \ldots, X_p)^\top$ a $p$-vector of covariates.
$T$ and $\Delta$ are observed exactly; the covariates are observed only through AI-extracted surrogates $\bfXs = (X_1^*, \ldots, X_p^*)^\top$. The true Cox model is
\begin{equation}
    \lambda(t \mid \bfX) = \lambda_0(t)\exp(\bfbeta^\top \bfX),
    \label{eq:cox}
\end{equation}
and $\bfbeta$ is estimated by maximizing the partial likelihood
\begin{equation*}
    \mathrm{PL}(\bfbeta) = \prod_{i:\Delta_i = 1} \frac{\exp(\bfbeta^\top \bfX_i)}{\sum_{j \in \mathcal{R}(T_i)} \exp(\bfbeta^\top \bfX_j)},
\end{equation*}
where $\mathcal{R}(t) = \{j : T_j \ge t\}$ is the risk set at time $t$. We write $\hat{\bfbeta}^*$ for the naive estimator obtained by substituting $\bfXs$ for $\bfX$ in the partial likelihood while keeping $T$ and $\Delta$ at their true values, and $\bfbeta^*$ for its asymptotic target. The bias is $\mathbf{b} := \bfbeta - \bfbeta^*$.

\paragraph{Vendor/researcher separation.} The data vendor has access to a validation dataset of $n_v$ paired observations $(\bfX_i, \bfXs_i)$ (with outcomes, when available, serving only as stratification variables for diagnostics) and reports accuracy summaries derived from it. The researcher receives only the extracted data $(T_i, \Delta_i, \bfXs_i)$ and these summaries. We use ``vendor-reported'' and ``researcher-computed'' throughout to indicate provenance.

\paragraph{Working assumptions.} Unless otherwise stated, we assume:
\begin{enumerate}
    \item[(A1)] \emph{Cox data-generating model and regularity}: The Cox model \eqref{eq:cox} is correctly specified for the true data, and censoring is independent of the event time conditional on the covariates. Additionally, we assume the following regularity conditions: (i) the true coefficient $\bfbeta$ lies in a compact set $\mathcal{B} \subset \R^p$; (ii) $(\bfX, \bfXs)$ is jointly sub-Gaussian, the moment generating function $\E[\exp(\bfbeta^\top\bfX)]$ is uniformly bounded on the compact set $\mathcal{B}$, and all polynomial moments of $\bfX$ and $\bfXs$ are finite; (iii) the study horizon $\tau < \infty$ is finite, the baseline hazard $\lambda_0$ and censoring survival $G(\cdot \mid \bfX)$ are bounded on $[0, \tau]$, and $s_X^{(0)}(\bfbeta, t) = \E[\exp(\bfbeta^\top\bfX)\ind(T \ge t)]$ is bounded below uniformly in $t \in [0, \tau]$; (iv) the Fisher information is finite and the usual second-moment conditions hold.
    \item[(A2)] \emph{Non-differential covariate extraction error}: $\bfXs \perp (T, \Delta) \mid \bfX$. The AI extracts covariates based only on the source record's covariate content, not on downstream outcome information.
    \item[(A3)] \emph{Conditional linear calibration}: the multivariate calibration function is linear: $\E[\bfX \mid \bfXs = \mathbf{x}^*] = \mathbf{a} + \mathbf{B}\mathbf{x}^*$, where $\mathbf{B} = \Cov(\bfX, \bfXs)[\Var(\bfXs)]^{-1}$ is the $p \times p$ calibration slope matrix and $\mathbf{a} = \E[\bfX] - \mathbf{B}\E[\bfXs]$. Moreover, $\mathbf{B}$ is nonsingular, and its condition number $\kappa(\mathbf{B})$ is bounded.
\end{enumerate}
(A2) is a standard assumption for characterizing bias due to measurement error. 
Note that no assumption of independent extraction errors \emph{across covariates} is needed: the unified calibration matrix $\mathbf{B}$ in Section~\ref{sec:bias_cov} captures all cross-covariate correlations automatically.
(A3) holds exactly when $(\bfX, \bfXs)$ is jointly Gaussian; it is most defensible for continuous covariates and only approximate for binary ones.
Crucially, (A3) is testable on the validation data via the goodness-of-fit of the calibration regression, so vendors can flag the magnitude of violations.

Figure~\ref{fig:workflow} summarizes the methodology. 
The vendor provides the multivariate calibration slope $\hat{\mathbf{B}}$ from a validation dataset comparing AI-extracted covariates to gold-standard labels.
The researcher uses $\hat{\mathbf{B}}$ together with the naive Cox fit on the extracted data to estimate and correct the measurement-error bias.
Additional summary statistics and computations yield bias-corrected statistical inference (Section~\ref{sec:bias_ci}).

\begin{figure}[!ht]
\centering
\resizebox{0.92\textwidth}{!}{%
\begin{tikzpicture}[
    >=Stealth,
    box/.style={draw, rounded corners=3pt, minimum height=0.8cm, align=center, font=\small, text width=4.2cm},
    databox/.style={box, fill=blue!8},
    compbox/.style={box, fill=orange!10},
    resultbox/.style={box, fill=green!10},
    summbox/.style={draw, rounded corners=3pt, fill=yellow!12, minimum height=0.65cm, align=center, font=\small, text width=2.8cm},
    cibox/.style={draw, rounded corners=3pt, fill=green!10, minimum height=1.1cm, align=center, font=\small, text width=5.5cm},
    lbl/.style={font=\normalsize\sffamily\bfseries},
    arr/.style={->, thick, gray!60},
    darr/.style={->, thick, gray!60, dashed}
]

\node[lbl] (vlbl) at (0, 0) {VENDOR};
\node[lbl] (slbl) at (7.5, 0) {REPORTED SUMMARIES};
\node[lbl] (rlbl) at (14, 0) {RESEARCHER};

\node[databox, below=0.4cm of vlbl] (vdata) {Validation data\\[1pt]{\footnotesize $n_v$ paired records}\\[-1pt]{\footnotesize $(\bfX_i, \bfXs_i)$}};

\node[databox, below=0.4cm of rlbl] (rdata) {Study data\\[1pt]{\footnotesize $n_{\text{study}}$ records, $n_{\text{study}} \gg n_v$}\\[-1pt]{\footnotesize $(T_i, \Delta_i, \bfXs_i)$}};

\node[compbox, below=0.7cm of vdata] (compB) {Multivariate OLS: $\bfX$ on $\bfXs$\\[1pt]{\footnotesize $\hat{\mathbf{B}} = (\mathbb{X}^\top\mathbb{X}^*)(\mathbb{X}^{*\top}\mathbb{X}^*)^{-1}$}};
\draw[arr] (vdata) -- (compB);

\node[summbox] at (7.5, -2.5) (summ1) {$\hat{\mathbf{B}}$ {\footnotesize ($p\!\times\!p$ matrix)}};
\node[summbox, below=0.4cm of summ1, text width=4.2cm] (summOpt) {{\footnotesize Propagated CI:}\\[1pt]{\footnotesize $\hat{\boldsymbol{\Sigma}}_{\mathrm{resid}}$, $(\mathbb{X}^{*\top}\mathbb{X}^*)^{-1}$}};

\draw[arr] (compB.east) -- (summ1.west);
\draw[darr] (compB.east) -- ++(0.3,0) |- (summOpt.west);

\node[compbox, below=0.7cm of rdata] (compCox) {Fit Cox model to $(\bfXs_i, T_i, \Delta_i)$\\[1pt]{\footnotesize $\hat{\bfbeta}^*$, $\hat{\boldsymbol{\Sigma}}_{\beta^*}$}};

\draw[arr] (rdata) -- (compCox);

\node[resultbox, text width=10cm, minimum height=1.3cm] at (7.5, -7) (corr) {%
\textsf{\textbf{CORRECTED ESTIMATE}} (\S\ref{sec:bias_cov})\\[4pt]
$\hat{\bfbeta}_{\mathrm{corr}} \;=\; (\hat{\mathbf{B}}^\top)^{-1}\hat{\bfbeta}^*$
};

\draw[arr] (compCox.south) -- ++(0,-0.7) -| ([xshift=2.5cm]corr.north);
\draw[arr] (summ1.south) -- ++(0,-0.3) -| ([xshift=-1.5cm]corr.north);

\node[cibox] at (3.75, -10) (plugci) {%
\textsf{\textbf{Plug-in CI (\S\ref{sec:bias_ci}) }}\\[3pt]
{\footnotesize $\mathbf{V}_{\beta^*} = (\hat{\mathbf{B}}^\top)^{-1}\hat{\boldsymbol{\Sigma}}_{\beta^*}\hat{\mathbf{B}}^{-1}$}\\[2pt]
{\footnotesize $\mathbf{c}^\top\hat{\bfbeta}_{\mathrm{corr}} \pm z_{\alpha/2}\sqrt{\mathbf{c}^\top\mathbf{V}_{\beta^*}\mathbf{c}}$}
};

\node[cibox] at (11.25, -10) (propci) {%
\textsf{\textbf{Propagated CI (\S\ref{sec:bias_ci}) }}\\[3pt]
{\footnotesize $\widehat{\Var}(\hat{\bfbeta}_{\mathrm{corr}}) \approx \mathbf{V}_{\beta^*} + \mathbf{V}_{B}$}\\[2pt]
{\footnotesize $\mathbf{c}^\top\hat{\bfbeta}_{\mathrm{corr}} \pm z_{\alpha/2}\sqrt{\mathbf{c}^\top(\mathbf{V}_{\beta^*} + \mathbf{V}_{B})\mathbf{c}}$}
};

\draw[arr] ([xshift=-1.5cm]corr.south) -- (plugci.north);
\draw[arr] ([xshift=1.5cm]corr.south) -- (propci.north);

\draw[darr] (summOpt.south) -- ++(0,-0.5) -| ([xshift=-1.5cm]propci.north);

\begin{scope}[on background layer]
\node[draw=blue!30, fill=blue!3, rounded corners=5pt, fit=(vlbl)(vdata)(compB), inner sep=10pt] {};
\node[draw=blue!30, fill=blue!3, rounded corners=5pt, fit=(rlbl)(rdata)(compCox), inner sep=10pt] {};
\node[draw=yellow!50, fill=yellow!5, rounded corners=5pt, fit=(slbl)(summ1)(summOpt), inner sep=10pt] {};
\end{scope}

\end{tikzpicture}%
}
\caption{Workflow for bias-corrected inference from AI-extracted covariates in a Cox proportional hazards analysis.}
\label{fig:workflow}
\end{figure}

\subsection{Bias from Covariate Extraction Error}
\label{sec:bias_cov}

We treat continuous and binary covariates within a single multivariate regression calibration framework. This unified approach handles all cross-covariate correlations automatically and requires no assumption of independent errors across covariates.

The \emph{multivariate calibration function}
\begin{equation*}
    \mathbf{m}(\mathbf{x}^*) = \E[\bfX \mid \bfXs = \mathbf{x}^*]
\end{equation*}
maps the full vector of extracted values to the conditional expectation of the truth. Under (A3), $\mathbf{m}(\mathbf{x}^*) = \mathbf{a} + \mathbf{B}\mathbf{x}^*$ with $\mathbf{B}$ and $\mathbf{a}$ as defined in the assumption.

\paragraph{Vendor computation of $\hat{\mathbf{B}}$.} The vendor has $n_v$ paired observations $(\bfX_i, \bfXs_i)$ from the validation dataset. 
The vendor computes $\hat{\mathbf{B}}$ by multivariate OLS: regressing each true covariate column $X_j$ on the full matrix of extracted covariates $\bfXs$, for all $j$. 
In matrix form, let $\mathbb{X} \in \R^{n_v \times p}$ be the matrix of true covariates and $\mathbb{X}^* \in \R^{n_v \times p}$ the matrix of extracted covariates (both centered). 
Then
\begin{equation*}
    \hat{\mathbf{B}} = \bigl(\mathbb{X}^\top\mathbb{X}^*\bigr)\bigl(\mathbb{X}^{*\top}\mathbb{X}^*\bigr)^{-1}.
\end{equation*}
This is equivalent to running $p$ linear regressions with each column of $\mathbb{X}$ as an outcome on all columns of $\mathbb{X}^*$ as covariates, and forming a $p \times p$ coefficient matrix. 
Off-diagonal entries capture how extraction error in one variable predicts the truth of another (e.g., if the AI systematically overestimates age for treated patients, $\hat{\mathbf{B}}_{\mathrm{age,trt}} \ne 0$).

Theorem~\ref{thm:cov_bias} provides a bias derivation that allows us to use $\hat{\mathbf{B}}$ in a corrected estimator:

\begin{theorem}[Covariate extraction bias via calibration]
\label{thm:cov_bias}
Under (A1)--(A3), the naive Cox estimator's bias satisfies
\begin{equation}
    \mathbf{b} \;=\; \bfbeta - \bfbeta^* \;=\; \bigl[(\mathbf{B}^\top)^{-1} - \mathbf{I}\bigr]\,\bfbeta^* \;-\; \boldsymbol{\Omega}_{\mathrm{RC}}^{-1}\,\mathbf{u}_{\mathrm{RC}}(\bfbeta),
    \label{eq:bias_unified}
\end{equation}
where $\boldsymbol{\Omega}_{\mathrm{RC}}^{-1}\,\mathbf{u}_{\mathrm{RC}}(\bfbeta)$ is the \emph{RC residual}: $\mathbf{u}_{\mathrm{RC}}(\bfbeta)$ is the population regression-calibration score evaluated at the true $\bfbeta$, and $\boldsymbol{\Omega}_{\mathrm{RC}}$ is the RC information matrix. This residual is of order
\begin{equation*}
    \|\boldsymbol{\Omega}_{\mathrm{RC}}^{-1}\,\mathbf{u}_{\mathrm{RC}}(\bfbeta)\| \;=\; O(\|\bfbeta\| \cdot \|\E[\boldsymbol{\Sigma}_{X|X^*}]\|^{1/2}),
\end{equation*}
where
\begin{equation*}
    \boldsymbol{\Sigma}_{X|X^*} = \Var(\bfX \mid \bfXs) = \E[(\bfX - \mathbf{m}(\bfXs))(\bfX - \mathbf{m}(\bfXs))^\top \mid \bfXs]
\end{equation*}
is the \emph{calibration residual variance}---the conditional variance of the true covariates given the extracted covariates.
\end{theorem}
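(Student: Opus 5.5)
The plan is to compare three population estimating equations. The first is the true Cox score in $\bfX$, solved by $\bfbeta$. The second is the naive score in $\bfXs$, solved by $\bfbeta^*$. The third is the regression-calibration (RC) score, obtained by replacing $\bfX$ with $\mathbf{m}(\bfXs)=\mathbf{a}+\mathbf{B}\bfXs$. Write the population naive score as
\[
\mathbf{u}^*(\boldsymbol{\gamma})=\E\!\int_0^\tau\Bigl\{\bfXs-\frac{s^{(1)}_{X^*}(\boldsymbol{\gamma},t)}{s^{(0)}_{X^*}(\boldsymbol{\gamma},t)}\Bigr\}\,dN(t).
\]
Standard M-estimation arguments under (A1)(i)--(iv) give that $\bfbeta^*$ is the unique root of $\mathbf{u}^*$ in $\mathcal{B}$. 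Define the RC score $\mathbf{u}_{\mathrm{RC}}(\boldsymbol{\theta})$ analogously, with covariate $\mathbf{m}(\bfXs)$ and linear predictor $\boldsymbol{\theta}^\top\mathbf{m}(\bfXs)$, and let $\boldsymbol{\Omega}_{\mathrm{RC}}=-\partial\mathbf{u}_{\mathrm{RC}}/\partial\boldsymbol{\theta}^\top$ be its information.

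\textbf{Step 1 (exact reparametrization).} Since $\boldsymbol{\theta}^\top\mathbf{m}(\bfXs)=\boldsymbol{\theta}^\top\mathbf{a}+(\mathbf{B}^\top\boldsymbol{\theta})^\top\bfXs$, the constant $\boldsymbol{\theta}^\top\mathbf{a}$ cancels from the partial likelihood ratios. Centering within risk sets kills $\mathbf{a}$ in the score, so
\[
\mathbf{u}_{\mathrm{RC}}(\boldsymbol{\theta})=\mathbf{B}\,\mathbf{u}^*(\mathbf{B}^\top\boldsymbol{\theta}).
\]
By (A3), $\mathbf{B}$ is nonsingular. Hence the root of $\mathbf{u}_{\mathrm{RC}}$ is exactly $\boldsymbol{\theta}_{\mathrm{RC}}=(\mathbf{B}^\top)^{-1}\bfbeta^*$. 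This is an exact identity, not an approximation, and it produces the leading term: $\bfbeta-\bfbeta^*=[(\mathbf{B}^\top)^{-1}-\mathbf{I}]\bfbeta^*+(\bfbeta-\boldsymbol{\theta}_{\mathrm{RC}})$.

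\textbf{Step 2 (residual via one-step expansion).} Expand $\mathbf{u}_{\mathrm{RC}}$ around its root using the mean value theorem:
\[
\mathbf{u}_{\mathrm{RC}}(\bfbeta)=\mathbf{u}_{\mathrm{RC}}(\boldsymbol{\theta}_{\mathrm{RC}})-\bar{\boldsymbol{\Omega}}_{\mathrm{RC}}(\bfbeta-\boldsymbol{\theta}_{\mathrm{RC}})=-\bar{\boldsymbol{\Omega}}_{\mathrm{RC}}(\bfbeta-\boldsymbol{\theta}_{\mathrm{RC}}).
\]
This gives $\bfbeta-\boldsymbol{\theta}_{\mathrm{RC}}=-\boldsymbol{\Omega}_{\mathrm{RC}}^{-1}\mathbf{u}_{\mathrm{RC}}(\bfbeta)$, with $\boldsymbol{\Omega}_{\mathrm{RC}}$ understood as the averaged information along the segment. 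Invertibility follows from $\boldsymbol{\Omega}_{\mathrm{RC}}=\mathbf{B}\,\mathcal{I}^*\mathbf{B}^\top$, where $\mathcal{I}^*$ is the naive information. Here $\mathcal{I}^*$ is positive definite by (A1)(iii)--(iv), and the bounded $\kappa(\mathbf{B})$ in (A3) keeps $\|\boldsymbol{\Omega}_{\mathrm{RC}}^{-1}\|$ bounded. Together these yield \eqref{eq:bias_unified}.

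\textbf{Step 3 (order of the residual, the main obstacle).} The task is to bound $\mathbf{u}_{\mathrm{RC}}(\bfbeta)$.
\begin{itemize}
\item Use (A2) to write the intensity given $\bfXs$ as $\lambda_0(t)\,\E[e^{\bfbeta^\top\bfX}\ind(T\ge t)\mid\bfXs]$.
\item Decompose $\bfX=\mathbf{m}(\bfXs)+\boldsymbol{\epsilon}$ with $\E[\boldsymbol{\epsilon}\mid\bfXs]=0$.
\item The true score vanishes at $\bfbeta$, so $\mathbf{u}_{\mathrm{RC}}(\bfbeta)$ equals the RC score minus the true score.
\end{itemize}
This difference collects two kinds of terms. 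One is $\E[\boldsymbol{\epsilon}\,dN]$. The other is the discrepancy between the risk-set means of $\bfX e^{\bfbeta^\top\bfX}$ and of $\mathbf{m}(\bfXs)e^{\bfbeta^\top\mathbf{m}(\bfXs)}$. Both involve $\boldsymbol{\epsilon}$ only through factors such as $e^{\bfbeta^\top\boldsymbol{\epsilon}}-1$ and $\boldsymbol{\epsilon}\,e^{\bfbeta^\top\boldsymbol{\epsilon}}$ weighted by the at-risk indicator. Because the at-risk probability depends on $\boldsymbol{\epsilon}$, the first-order term in $\boldsymbol{\epsilon}$ does not vanish by itself.

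I would bound each term by Cauchy--Schwarz. Use $|e^{x}-1|\le|x|e^{|x|}$, sub-Gaussianity and the moment-generating-function bound from (A1)(ii) to control the exponential factors, and the boundedness of $\lambda_0$, $G$ and the horizon $\tau$ from (A1)(iii). For example, $\E|\bfbeta^\top\boldsymbol{\epsilon}|\le\|\bfbeta\|\,\|\E\boldsymbol{\Sigma}_{X|X^*}\|^{1/2}$. The denominators $s^{(0)}$ are bounded below uniformly in $t$ by (A1)(iii), so differences of ratios are bounded by differences of numerators. This gives $\|\mathbf{u}_{\mathrm{RC}}(\bfbeta)\|=O(\|\bfbeta\|\,\|\E\boldsymbol{\Sigma}_{X|X^*}\|^{1/2})$, and combining with the bounded $\|\boldsymbol{\Omega}_{\mathrm{RC}}^{-1}\|$ from Step 2 gives the stated order. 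The delicate point is making the exponential-moment bounds uniform over $\bfbeta\in\mathcal{B}$ and over $t\in[0,\tau]$; I expect this to be the main technical work.
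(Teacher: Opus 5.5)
Your Steps 1 and 2 are correct and follow the paper's Steps 1 and 2e. The score identity $\mathbf{u}_{\mathrm{RC}}(\boldsymbol{\theta})=\mathbf{B}\,\mathbf{u}^*(\mathbf{B}^\top\boldsymbol{\theta})$ is a clean population version of the paper's partial-likelihood reparametrization, and the segment-averaged information makes the paper's approximate Taylor inversion exact. Your split of $\mathbf{u}_{\mathrm{RC}}(\bfbeta)$ into $-\E[\Delta\,\boldsymbol{\epsilon}]$ plus a risk-set discrepancy is exactly the paper's Terms A and C. The gap is in how you obtain the factor $\|\bfbeta\|$. Terms carrying $e^{\bfbeta^\top\boldsymbol{\epsilon}}-1$ are fine, as your example shows. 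The problem is the terms carrying $\boldsymbol{\epsilon}$ (or $\boldsymbol{\epsilon}e^{\bfbeta^\top\boldsymbol{\epsilon}}$) with no $\bfbeta$ in front. These are $\E[\Delta\,\boldsymbol{\epsilon}]$ and the piece of the discrepancy that at $\bfbeta=\mathbf{0}$ equals $\E[\boldsymbol{\epsilon}\,\ind(T\ge t)]/\Prob(T\ge t)$. Cauchy--Schwarz bounds them only by $O(\|\E[\boldsymbol{\Sigma}_{X|X^*}]\|^{1/2})$, with no $\|\bfbeta\|$. They are also genuinely nonzero under (A1), which allows the censoring $G(\cdot\mid\bfX)$ to depend on covariates. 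By (A2), $\E[\Delta\,\boldsymbol{\epsilon}]=-\Cov(\E[\Delta\mid\bfX],\boldsymbol{\eta}(\bfX))$ with $\boldsymbol{\eta}(\bfX)=\E[\mathbf{m}(\bfXs)\mid\bfX]-\bfX$, and even at $\bfbeta=\mathbf{0}$ the event probability $\E[\Delta\mid\bfX]$ varies with $\bfX$ through censoring. You note yourself that the first-order term in $\boldsymbol{\epsilon}$ does not vanish by itself, but nothing in your plan then removes it. As written, Step 3 yields only $\|\mathbf{u}_{\mathrm{RC}}(\bfbeta)\|=O(\|\E[\boldsymbol{\Sigma}_{X|X^*}]\|^{1/2})$, which does not even give a residual that vanishes at $\bfbeta=\mathbf{0}$.

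The missing idea is an exact cancellation at the null. View $\mathbf{u}_{\mathrm{RC}}(\bfbeta)$ as a function of the true $\bfbeta$, which also drives the law of $(T,\Delta)$. When $\bfbeta=\mathbf{0}$ the event hazard is $\lambda_0(t)$ whatever $\bfX$ is, hence by (A2) whatever $\bfXs$ is. So $M(t)=N(t)-\int_0^t Y(s)\lambda_0(s)\,ds$ is a martingale with respect to a filtration containing $\bfXs$. It follows that $\mathbf{u}_{\mathrm{RC}}(\mathbf{0})=\E\int_0^\tau\{\mathbf{m}(\bfXs)-\bar{\mathbf{m}}(\mathbf{0},t)\}\,dM(t)=\mathbf{0}$, since the compensator contribution vanishes by the definition of $\bar{\mathbf{m}}(\mathbf{0},t)$. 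The $\bfbeta$-free pieces above are therefore individually of size $\|\E[\boldsymbol{\Sigma}_{X|X^*}]\|^{1/2}$ but sum to zero, and the $\|\bfbeta\|$ factor is available only for what remains after they are removed. The paper organizes this in four moves: it writes Term A as $\Cov(\E[\Delta\mid\bfX],\boldsymbol{\eta}(\bfX))$ with the double-projection bound $\mathrm{tr}\,\Var(\boldsymbol{\eta}(\bfX))\le\mathrm{tr}\,\E[\boldsymbol{\Sigma}_{X|X^*}]$; it splits $\E[\Delta\mid\bfX]=h(\bfX)+g(\bfX,\bfbeta)$ into a censoring part and a hazard part with $\Var(g)=O(\|\bfbeta\|^2)$; it splits $\mathbf{D}(\bfbeta,t)=\mathbf{D}(\mathbf{0},t)+O(\|\bfbeta\|\,\|\E[\boldsymbol{\Sigma}_{X|X^*}]\|^{1/2})$; and it cancels $\text{A}_0+\text{C}_0=\mathbf{0}$ by the martingale identity. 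A smaller point in your Step 2: a bounded condition number $\kappa(\mathbf{B})$ does not control $\|\mathbf{B}^{-1}\|$, so it does not by itself bound $\|\boldsymbol{\Omega}_{\mathrm{RC}}^{-1}\|=\|(\mathbf{B}^\top)^{-1}(\mathcal{I}^*)^{-1}\mathbf{B}^{-1}\|$. It is more natural to argue that $\boldsymbol{\Omega}_{\mathrm{RC}}$ is the Cox information for the covariate $\mathbf{m}(\bfXs)$, whose variance $\Var(\bfX)-\E[\boldsymbol{\Sigma}_{X|X^*}]$ stays nondegenerate in the accurate-extraction regime.
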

Theorem~\ref{thm:cov_bias} is a leading-order asymptotic correction valid in the accurate-extraction, approximately-linear regime; it is not an exact finite-sample result.
It has been shown that regression calibration is not consistent for the Cox model \citep{carroll2006,xie2001,wangsong2021}. 
Bias arises because the induced hazard given calibrated covariates includes a factor depending on $\Var(\bfX \mid \bfXs, T \ge t)$, which varies with $t$ as the risk set composition changes. 
$\boldsymbol{\Sigma}_{X|X^*}$ measures how much uncertainty about $\bfX$ remains after observing $\bfXs$. 
When extraction is perfect up to a linear transformation ($\bfX = \mathbf{a} + \mathbf{B}\bfXs$), $\boldsymbol{\Sigma}_{X|X^*} = \mathbf{0}$ and the RC residual vanishes. 
 
The RC residual is not directly estimable from the observed data, as it depends on the unknown baseline hazard and true $\bfbeta$. 
When the residual is small, which holds when $\|\E[\boldsymbol{\Sigma}_{X|X^*}]\|$ is small (accurate extraction), the bias is dominated by the calibration term:
\begin{equation}
    \mathbf{b} \;\approx\; \bigl[(\mathbf{B}^\top)^{-1} - \mathbf{I}\bigr]\,\bfbeta^*, \qquad \hat{\bfbeta}_{\mathrm{corr}} \;=\; (\hat{\mathbf{B}}^\top)^{-1}\hat{\bfbeta}^*.
    \label{eq:bias_leading}
\end{equation}
The correction is fully computable from the vendor-reported $\hat{\mathbf{B}}$ and the researcher's naive estimate $\hat{\bfbeta}^*$. 
When the RC residual is suspected to be non-negligible, its potential magnitude should be assessed with a sensitivity diagnostic (Section~\ref{sec:sensitivity}).

Table~\ref{tab:stats} consolidates the validation summaries the vendor must report to make this correction and the inference that follows available to a downstream researcher: the calibration slope matrix $\hat{\mathbf{B}}$ alone suffices for the point correction, while the remaining entries support the propagated intervals of Section~\ref{sec:bias_ci} and the sensitivity diagnostic of Section~\ref{sec:sensitivity}.

\begin{table}[ht]
\centering
\caption{Vendor-reported validation summary statistics.}
\label{tab:stats}
\begin{tabular}{@{}lll@{}}
\toprule
\textbf{Purpose} & \textbf{Summary statistic} & \textbf{Symbol} \\
\midrule
Point correction & Full calibration slope matrix & $\hat{\mathbf{B}}$, $p \times p$ \\
\addlinespace
Propagated inference & Residual covariance of calibration regressions & $\hat{\boldsymbol{\Sigma}}_{\mathrm{resid}}$, $p \times p$ \\
 & Design Gram inverse & $(\mathbb{X}^{*\top}\mathbb{X}^*)^{-1}$, $p \times p$ \\
\addlinespace
Sensitivity diagnostic & Calibration residual variance norm & $ \|\hat{\boldsymbol{\Sigma}}_{X|X^*}\|_{\mathrm{op}}^{1/2} $  \\
\bottomrule
\end{tabular}
\end{table}

\paragraph{The binary covariate setting.}
While Theorem~\ref{thm:cov_bias} covers continuous, binary, and mixed covariates, the discrete case is where the leading-order correction is most strained.
The linear calibration assumption (A3) is only an approximation for a binary covariate: the true calibration function $\E[X_j \mid \bfXs] = \Prob(X_j = 1 \mid \bfXs)$ is a conditional probability confined to $[0,1]$, whereas the linear form $\mathbf{a} + \mathbf{B}\mathbf{x}^*$ is unbounded.
The vendor's OLS fit therefore recovers the best \emph{linear} predictor of $X_j$ rather than the conditional mean itself, leaving a curvature error $\Var \big( \E[X_j\mid\bfXs] - (\mathbf{a}+\mathbf{B}\bfXs)_j \big)$ that is zero under Gaussian (A3) but generally nonzero here.
A fully nonparametric calibration of $\E[\bfX \mid \bfXs]$ would eliminate it, at the cost of requiring the vendor to report richer, nonlinear calibration summaries.

\subsection{Bias-Adjusted Confidence Intervals}
\label{sec:bias_ci}

The bias formula in Theorem~\ref{thm:cov_bias} expresses the difference $\bfbeta - \bfbeta^*$ in terms of estimable quantities and provides a corrected estimator $\hat{\bfbeta}_{\mathrm{corr}}$. 
While this is valid as a point estimate, the extraction error also needs to be incorporated into statistical inference.

Because the calibration matrix $\mathbf{B}$ is generally non-diagonal with off-diagonal entries encoding cross-covariate extraction error correlations, inference should be conducted at the vector level rather than coordinate-wise. 
To leading order in the RC residual, Theorem~\ref{thm:cov_bias} gives
\begin{equation*}
    \hat{\bfbeta}^* \;=\; \mathbf{B}^\top\bfbeta \;+\; \boldsymbol{\varepsilon}, \qquad \boldsymbol{\varepsilon} \sim N(\mathbf{0}, \boldsymbol{\Sigma}_{\beta^*}),
\end{equation*}
where $\boldsymbol{\Sigma}_{\beta^*}$ is the sampling covariance of $\hat{\bfbeta}^*$. The vector-corrected estimator is
\begin{equation*}
    \hat{\bfbeta}_{\mathrm{corr}} \;=\; (\hat{\mathbf{B}}^\top)^{-1}\hat{\bfbeta}^*.
\end{equation*}
Wald inference on any linear contrast $\mathbf{c}^\top\bfbeta$ (e.g., $\mathbf{c} = \mathbf{e}_k$ for the $k$th coefficient) proceeds from the multivariate delta method applied to this correction.

\subsubsection{Plug-in CI (calibration treated as known)}

Assuming $\hat{\mathbf{B}} \approx \mathbf{B}$ with negligible estimation error, the Jacobian is $\partial\hat{\bfbeta}_{\mathrm{corr}}/\partial\hat{\bfbeta}^* = (\hat{\mathbf{B}}^\top)^{-1}$, and the covariance of the corrected estimator is
\begin{equation}\label{eq:V_beta_star}
    \mathbf{V}_{\beta^*} \;=\; (\hat{\mathbf{B}}^\top)^{-1}\,\hat{\boldsymbol{\Sigma}}_{\beta^*}\,\hat{\mathbf{B}}^{-1}.
\end{equation}
For a contrast $\mathbf{c}^\top\bfbeta$, the plug-in CI is
\begin{equation*}
    \mathrm{CI}_{\mathrm{plug}}(\mathbf{c}) \;=\; \mathbf{c}^\top\hat{\bfbeta}_{\mathrm{corr}} \;\pm\; z_{\alpha/2}\sqrt{\mathbf{c}^\top\mathbf{V}_{\beta^*}\mathbf{c}}.
\end{equation*}
The marginal CI for $\beta_k$ is obtained with $\mathbf{c} = \mathbf{e}_k$ and uses the $k$th diagonal entry of $\mathbf{V}_{\beta^*}$. This approach is appropriate when the validation sample $n_v$ is large relative to the study, so sampling variability in $\hat{\mathbf{B}}$ is negligible compared to that in $\hat{\bfbeta}^*$.

\subsubsection{Propagated CI (calibration estimated with uncertainty)}

When $\hat{\mathbf{B}}$ carries non-negligible estimation error, the variance must also reflect uncertainty in the vendor's calibration fit. 
Treating $\hat{\bfbeta}^*$ and $\hat{\mathbf{B}}$ as independent (the study fit is independent of the validation fit), the multivariate delta method applied to $\hat{\bfbeta}_{\mathrm{corr}} = (\hat{\mathbf{B}}^\top)^{-1}\hat{\bfbeta}^*$ yields two additive variance contributions $\widehat{\Var}(\hat{\bfbeta}_{\mathrm{corr}}) \;\approx\; \mathbf{V}_{\beta^*} \;+\; \mathbf{V}_{B}.$

The study-data component $\mathbf{V}_{\beta^*}$ is \eqref{eq:V_beta_star}.
Assuming all $p$ calibration regressions share the same extracted-covariate design $\mathbb{X}^*$, the covariance of $\hat{\mathbf{B}}$ factors as $\widehat{\Cov}(\hat{B}_{ij}, \hat{B}_{i'j'}) = [\hat{\boldsymbol{\Sigma}}_{\mathrm{resid}}]_{ii'}\,[(\mathbb{X}^{*\top}\mathbb{X}^*)^{-1}]_{jj'}$, where $\hat{\boldsymbol{\Sigma}}_{\mathrm{resid}}$ is the $p\times p$ residual covariance across the $p$ calibration regressions, computed as follows.
Consider the $n_v \times p$ residual matrix
\begin{equation*}
    \mathbf{Z} \;=\;
    \begin{pmatrix}
        \mathbf{z}_1^\top \\ \vdots \\ \mathbf{z}_{n_v}^\top
    \end{pmatrix}
    \;=\; \mathbb{X} \;-\; \mathbb{X}^*\hat{\mathbf{B}}^\top,
\end{equation*}
whose $(i,j)$ entry is the residual of subject $i$ in the $j$-th calibration regression. The residual covariance matrix is the sample cross-product, scaled by the residual degrees of freedom:
\begin{equation*}
    \hat{\boldsymbol{\Sigma}}_{\mathrm{resid}} \;=\; \frac{1}{n_v - p - 1}\,\mathbf{Z}^\top\mathbf{Z} \;\in\; \R^{p \times p}.
\end{equation*}

Propagating this through $\partial\hat{\bfbeta}_{\mathrm{corr}}/\partial\hat{B}_{ij} = -[\hat{\bfbeta}_{\mathrm{corr}}]_i\,(\hat{\mathbf{B}}^\top)^{-1}\mathbf{e}_j$ (from $\partial M^{-1}/\partial M_{ab} = -M^{-1}\mathbf{e}_a\mathbf{e}_b^\top M^{-1}$) yields
\begin{equation*}
    \mathbf{V}_{B} \;=\; \bigl(\hat{\bfbeta}_{\mathrm{corr}}^\top\,\hat{\boldsymbol{\Sigma}}_{\mathrm{resid}}\,\hat{\bfbeta}_{\mathrm{corr}}\bigr)\,(\hat{\mathbf{B}}^\top)^{-1}\,(\mathbb{X}^{*\top}\mathbb{X}^*)^{-1}\,\hat{\mathbf{B}}^{-1}.
\end{equation*}
This yields the Wald confidence interval
\begin{align*}
    \mathrm{CI}_{\mathrm{prop}}(\mathbf{c}) &\;=\; \mathbf{c}^\top\hat{\bfbeta}_{\mathrm{corr}} \;\pm\; z_{\alpha/2}\sqrt{\mathbf{c}^\top\bigl(\mathbf{V}_{\beta^*} + \mathbf{V}_{B}\bigr)\mathbf{c}}.
\end{align*}
The vendor reports two $p \times p$ matrices for propagated inference: the residual covariance $\hat{\boldsymbol{\Sigma}}_{\mathrm{resid}}$ and the design Gram inverse $(\mathbb{X}^{*\top}\mathbb{X}^*)^{-1}$ from the calibration regressions. When $\hat{\mathbf{B}}$ is approximately diagonal, the formulas collapse to coordinate-wise scalar expressions:
\begin{equation*}
    \Var(\hat{\beta}_{k,\mathrm{corr}}) \;\approx\; \frac{\hat{\sigma}_{\beta_k^*}^2}{\hat{B}_{kk}^2} \;+\; \frac{\hat{\beta}_{k,\mathrm{corr}}^2\,\widehat{\Var}(\hat{B}_{kk})}{\hat{B}_{kk}^2};
\end{equation*}
the general vector form should be used whenever off-diagonal entries of $\hat{\mathbf{B}}$ are non-negligible, representing cross-variable correlation of extraction errors.

If the vendor fits the $p$ calibration regressions with column-specific designs (due to e.g., variable selection or regularization applied separately per outcome), the Kronecker factorization above does not hold and $\mathbf{V}_B$ takes the general form $(\hat{\mathbf{B}}^\top)^{-1}\mathbf{Q}(\hat{\bfbeta}_{\mathrm{corr}})\hat{\mathbf{B}}^{-1}$ with $[\mathbf{Q}(\mathbf{v})]_{jj'} = \sum_{i,i'} v_i v_{i'}\,\widehat{\Cov}(\hat{B}_{ij}, \hat{B}_{i'j'})$, requiring the vendor to report the full covariance of $\hat{\mathbf{B}}$.
Note also that regularization biases $\hat{\mathbf{B}}$ toward zero; this bias propagates into $\hat{\bfbeta}_{\mathrm{corr}}$ and should be reported alongside the covariance so the researcher can account for it.

\subsubsection{Sensitivity analysis for the RC residual}
\label{sec:sensitivity}

The plug-in and propagated CIs account for sampling variability in $\hat{\bfbeta}^*$ and (in the propagated case) estimation uncertainty in $\hat{\mathbf{B}}$, but both treat the leading-order approximation $\bfbeta \approx (\mathbf{B}^\top)^{-1}\bfbeta^*$ as exact.
Theorem~\ref{thm:cov_bias} shows that the neglected RC residual contributes an additional systematic error of order $O(\|\bfbeta\|\cdot\|\E[\boldsymbol{\Sigma}_{X|X^*}]\|^{1/2})$.
This term cannot be estimated from the observed data, but it can be bounded, and the bound provides a sensitivity diagnostic for judging whether the leading-order correction is adequate.

Suppose the vendor reports the estimated calibration residual variance $\hat{\boldsymbol{\Sigma}}_{\mathrm{resid}}$, which is equal to $\boldsymbol{\Sigma}_{X|X^*}$ in expectation under (A3). 
Let $\bar\sigma = \|\hat{\boldsymbol{\Sigma}}_{\mathrm{resid}}\|_{\mathrm{op}}^{1/2}$ denote the square root of its largest eigenvalue.
The RC residual bound implies that the true $\bfbeta$ lies within
\begin{equation*}
    \|\bfbeta - \hat{\bfbeta}_{\mathrm{corr}}\| \;\le\; \underbrace{\text{(sampling error)}}_{\text{covered by Plug-in/Propagated CI}} \;+\; \underbrace{C\,\|\hat{\bfbeta}_{\mathrm{corr}}\|\,\bar\sigma}_{\text{RC residual bound}},
\end{equation*}
where $C$ is an unknown constant depending on the covariate, hazard, and censoring distributions.

To judge whether the RC residual would materially change the inference, define the ratio of the RC residual bound to the propagated CI half-width:
\begin{equation}
    \rho(\mathbf{c}) \;=\; \frac{|\mathbf{c}^\top\hat{\bfbeta}_{\mathrm{corr}}|\,\bar\sigma}{z_{\alpha/2}\sqrt{\mathbf{c}^\top(\mathbf{V}_{\beta^*} + \mathbf{V}_B)\mathbf{c}}}.
    \label{eq:rho}
\end{equation}
When $\rho \ll 1$, the RC residual is small relative to the statistical uncertainty already reflected in the CI, and the leading-order correction is adequate.
When $\rho$ is appreciable, the propagated CI may understate the true uncertainty, and should be interpreted with caution.
Because $\rho$ omits the unknown constant $C$, it screens for the \emph{potential} of excess bias rather than certifying its absence; a small $\rho$ indicates the residual is dominated by statistical uncertainty for any plausible $C$, not that the residual is zero.

\section{Simulation studies}
\label{sec:sims}

In this section, we implement simulation studies to examine the performance of our bias-corrected estimator and confidence intervals.
We simulate $n = n_{\text{study}} + n_v$ subjects with $p = 4$ covariates subject to AI-extraction error.
The true covariate vector $X = (X_1, X_2, X_3, X_4)^\top$ is generated with $X_1 \sim N(0, 1)$ and $X_2 \sim N(0, 1)$ as continuous covariates, and $X_3 \sim \text{Bernoulli}(0.5)$ and $X_4 \sim \text{Bernoulli}(0.5)$ as binary covariates.  
All four are mutually independent.
The event time follows a Cox model with exponential baseline hazard,
\[
  \lambda(t \mid X) = \lambda_0 \exp(\bfbeta^\top X),
\]
Censoring times are independently generated as $C \sim \text{Exponential}(\nu)$.
The observed data are $(Y, \Delta) = (\min(T, C),\; \ind\{T \le C\})$.
In both studies, the true coefficient vector is $\bfbeta = (0.60,\; -0.40,\; 0.50,\; -0.35)^\top$.
The baseline hazard rate is $\lambda_0 = 0.1$ and the censoring rate is $\nu = 0.05$.
This results in an observed event rate of 0.66.
$n_v = 300$ subjects form the validation set, in which the vendor has access to paired observations $(\bfX_i, \bfXs_i)$.  
The remaining $n_{\text{study}} = 1500$ subjects form the study set, in which the researcher observes only $(\bfXs_i, Y_i, \Delta_i)$ and the vendor-reported calibration summaries.

In each simulation replicate, the vendor computes $\hat{\mathbf{B}}$, $\hat{\boldsymbol{\Sigma}}_{\mathrm{resid}}$, and $(\mathbb{X}^{*\top}\mathbb{X}^*)^{-1}$ from the validation data.  
The researcher then fits the naive Cox model $\hat{\bfbeta}^*$ on the study data using $\bfXs$ in place of $\bfX$, and computes the corrected estimate $\hat{\bfbeta}_{\mathrm{corr}} = (\hat{\mathbf{B}}^\top)^{-1} \hat{\bfbeta}^*$ together with plug-in and propagated confidence intervals.
Each scenario is replicated $R = 500$ times.  
For each covariate $k$ and estimator (naive or corrected), we report the signed bias $R^{-1} \sum_{r=1}^{R} (\hat{\beta}^{(r)}_k - \beta_k)$ and the root mean squared error $\bigl[R^{-1} \sum_{r=1}^{R} (\hat{\beta}^{(r)}_k - \beta_k)^2\bigr]^{1/2}$.  
We also report the CI coverage (the proportion of replicates in which the 95\% confidence interval contains the true $\beta_k$) for each CI method. 
All simulations are run in R version 4.4.1 \citep{r-cite}, using the \texttt{survival} package version 3.8-3 \citep{survival-package} for fitting Cox models with the default Efron approximation for ties.

We additionally conduct a study of real data with synthetic covariate measurement error applied.
This is reported in Appendix~\ref{app:mimic}.

\subsection{Study 1: Cross-Dependent Extraction Errors}
\label{sec:cross-dependent}
In this study, a shared latent factor $Z_i \sim N(0,1)$, drawn independently per subject and independently of $X$, drives correlated extraction errors across all four covariates.
The extracted continuous covariates ($j = 1, 2$) are generated as
\[
  X_j^* = X_j + \lambda_j Z + \varepsilon_j, \qquad \varepsilon_j \stackrel{\text{iid}}{\sim} N(0, \sigma_{\varepsilon,j}^2),
\]
where $\varepsilon_j, Z, X$ are mutually independent.  
The factor loading $\lambda_j$ induces cross-covariate correlation through the shared~$Z$, while $\varepsilon_j$ adds covariate-specific error.
For $X_1$, the factor loading is $\lambda_1 = 0.6$ with noise $\sigma_{\varepsilon,1} = 0.25$; for $X_2$, $\lambda_2 = 0.5$ and $\sigma_{\varepsilon,2} = 0.20$.
For binary covariates ($j = 3, 4$), the misclassification probability depends on the same latent factor:
\[
  P(X_j^* \ne X_j \mid Z) = \text{logit}^{-1}(\alpha_j + \gamma_j |Z|).
\]
For $X_3$, the intercept is $\alpha_3 = -2.5$ and the sensitivity to the latent factor is $\gamma_3 = 1.5$; for $X_4$, $\alpha_4 = -2.8$ and $\gamma_4 = 1.2$.
Because both the continuous additive error ($\lambda_j Z$) and the binary flip probability ($\gamma_j |Z|$) depend on the same~$Z$, extraction errors are correlated across all four covariates.  
The multivariate calibration matrix $\mathbf{B}$ captures this structure through its off-diagonal entries.

Table~\ref{tab:cross-dependent} summarizes the simulation results.
Extraction quality is moderate to high across all covariates.
The naive estimator exhibits substantial bias: $X_1$ is attenuated by $-0.114$ (19\% of $\beta_1 = 0.60$) and $X_3$ by $-0.256$ (50\% of $\beta_3 = 0.50$), while $X_4$ is biased by $+0.121$ (35\% of $|\beta_4| = 0.35$).
The correction eliminates most of this bias, reducing signed bias to below 0.03 in absolute value for all covariates.
Notably, $X_2$ has near-zero naive bias; the correction slightly increases its absolute bias to $0.02$, reflecting the variance cost of the matrix inversion in $(\hat{\mathbf{B}}^\top)^{-1}$.
RMSE follows the same pattern: the correction reduces RMSE for most variables where bias dominates the naive error, but slightly inflates RMSE for $X_2$, where the naive estimator was already nearly unbiased and the correction adds estimation variance.

\begin{table}[htbp]
\centering
\caption{
Simulation results for cross-dependent extraction errors (Study~1).
Each cell reports the point estimate with a 95\% Monte Carlo confidence interval in brackets.
Extraction quality is $R^2$ for continuous covariates ($X_1$, $X_2$) and classification accuracy for binary covariates ($X_3$, $X_4$).}
\label{tab:cross-dependent}
\smallskip
\resizebox{\textwidth}{!}{%
\begin{tabular}{l c cc cc ccc}
\toprule
& & \multicolumn{2}{c}{Signed Bias} & \multicolumn{2}{c}{RMSE} & \multicolumn{3}{c}{Coverage (\%)} \\
\cmidrule(lr){3-4} \cmidrule(lr){5-6} \cmidrule(lr){7-9}
Covariate & Ext.\ Quality & Naive & Corrected & Naive & Corrected & Naive & Plug-in & Propagated \\
\midrule
$X_1$
  & 0.70 & $-$0.114 & $-$0.029
  & 0.118 & 0.057
  & 3.6 & 82.4 & 98.4 \\
  & {\scriptsize[0.70, 0.70]} & {\scriptsize[$-$0.117, $-$0.111]} & {\scriptsize[$-$0.033, $-$0.024]}
  & {\scriptsize[0.115, 0.121]} & {\scriptsize[0.053, 0.060]}
  & {\scriptsize[2.0, 5.2]} & {\scriptsize[79.1, 85.7]} & {\scriptsize[97.3, 99.5]} \\[4pt]
$X_2$
  & 0.77 & $-$0.008 & \phantom{$-$}0.020
  & 0.031 & 0.048
  & 94.4 & 88.0 & 99.2 \\
  & {\scriptsize[0.77, 0.78]} & {\scriptsize[$-$0.010, $-$0.005]} & {\scriptsize[0.017, 0.024]}
  & {\scriptsize[0.029, 0.033]} & {\scriptsize[0.045, 0.051]}
  & {\scriptsize[92.4, 96.4]} & {\scriptsize[85.2, 90.8]} & {\scriptsize[98.4, 100.0]} \\[4pt]
$X_3$
  & 0.76 & $-$0.256 & $-$0.015
  & 0.264 & 0.148
  & 3.8 & 90.0 & 94.4 \\
  & {\scriptsize[0.75, 0.76]} & {\scriptsize[$-$0.262, $-$0.251]} & {\scriptsize[$-$0.028, $-$0.002]}
  & {\scriptsize[0.259, 0.270]} & {\scriptsize[0.139, 0.158]}
  & {\scriptsize[2.1, 5.5]} & {\scriptsize[87.4, 92.6]} & {\scriptsize[92.4, 96.4]} \\[4pt]
$X_4$
  & 0.84 & \phantom{$-$}0.121 & \phantom{$-$}0.012
  & 0.136 & 0.108
  & 50.0 & 92.2 & 95.2 \\
  & {\scriptsize[0.84, 0.84]} & {\scriptsize[0.115, 0.126]} & {\scriptsize[0.003, 0.022]}
  & {\scriptsize[0.131, 0.141]} & {\scriptsize[0.102, 0.114]}
  & {\scriptsize[45.6, 54.4]} & {\scriptsize[89.8, 94.6]} & {\scriptsize[93.3, 97.1]} \\
\bottomrule
\end{tabular}}
\end{table}

Naive confidence intervals have poor coverage in general.
Plug-in CIs, which treat $\hat{\mathbf{B}}$ as known, improve coverage substantially (82--92\%) but remain below the nominal 95\% level for most covariates, indicating that calibration uncertainty is non-negligible with $n_v = 300$.
Propagated CIs, which account for uncertainty in $\hat{\mathbf{B}}$, achieve near or above nominal coverage for all covariates.
Appendix~\ref{app:prop-ci-cvg} demonstrates the convergence of CI coverage with increasing $n_v$ under this study's data-generating process.

\subsection{Study 2: Nonlinear Calibration (Constant Total Error)}
\label{sec:nonlinear}

This study tests robustness to deliberate, controlled violations of assumption (A3).
The extracted continuous covariates ($j = 1, 2$) are generated as
\[
X_j^* = X_j + \underbrace{\delta_j X_j^2}_{\text{quadratic distortion}} + \underbrace{\sigma_{0,j}(1 + \kappa_j X_j^2)\,\varepsilon_j}_{\text{heteroscedastic noise}}, \qquad \varepsilon_j \sim N(0, 1).
\]
The error has two nonlinear components.
The quadratic distortion term $\delta_j X_j^2$ introduces a systematic, mean-shifting nonlinearity.
The heteroscedastic noise term, governed by $\kappa_j$, makes the extraction noisier for extreme covariate values.
Together, $\text{MSE}_j = 3\delta_j^2 + \sigma_{0,j}^2 (1 + 2\kappa_j + 3\kappa_j^2)$.
For binary covariates ($j = 3, 4$), the misclassification probability is modulated by the true continuous covariates through
\[
P(X_j^* \ne X_j \mid X_1, X_2) = p_j \bigl(1 + \eta_j \tanh(X_1 + X_2)\bigr),
\]
where $p_j$ is the target marginal flip rate and $\eta_j \in [0, 1)$ controls the strength of the covariate dependence.
Since $\tanh$ is an odd function and $X_1 + X_2$ is symmetric around zero, $E[\tanh(X_1 + X_2)] = 0$, so the marginal flip rate is exactly $p_j$ regardless of $\eta_j$.
When $\eta_j = 0$, flips are uniform and the calibration $E[X_j \mid \bfXs]$ is linear.
When $\eta_j > 0$, the flip probability varies smoothly from $p_j(1 - \eta_j)$ to $p_j(1 + \eta_j)$ across the full range of $X_1 + X_2$, making $E[X_j \mid \bfXs]$ a nonlinear function of $(X_1^*, X_2^*)$ and violating (A3).

The continuous errors $\varepsilon_1$ and $\varepsilon_2$ are independent, and the binary flips for $X_3$ and $X_4$ are conditionally independent given $(X_1, X_2)$.
The only source of dependence between extracted covariates is through the true covariates themselves: $X_3^*$ and $X_4^*$ both depend on $X_1 + X_2$ (which governs their flip probabilities), so they are marginally correlated with $X_1^*$ and $X_2^*$.
Off-diagonal entries in the estimated $\hat{\mathbf{B}}$ will be nonzero, but the extraction \emph{errors} are conditionally uncorrelated given the true covariates.
This contrasts with Study~1, where the shared $Z$ induces genuine cross-covariate error correlation.
We investigate three levels of nonlinearity:
\begin{enumerate}
\item Under \textbf{mild nonlinearity}, $\delta = 0.02$, $\kappa = 0.05$, and $\eta = 0.10$.
\item Under \textbf{moderate nonlinearity}, $\delta = 0.15$, $\kappa = 0.30$, and $\eta = 0.50$.
\item Under \textbf{severe nonlinearity}, $\delta = 0.25$, $\kappa = 0.55$, and $\eta = 0.90$.
\end{enumerate}
To isolate the effect of nonlinearity shape, $\sigma_{0,j}$ is calibrated so that the marginal MSE is identical across severity levels, with targets $\text{MSE}_1 = \text{MSE}_2 = 0.70$.
The binary flip rates are fixed at $p_3 = p_4 = 0.20$ and are automatically preserved by the $\tanh$ modulation.
Thus, any change in bias, RMSE, or CI coverage between severity levels is attributable to the nonlinearity of the calibration function, not to the amount of error.

Table~\ref{tab:nonlinear} displays RMSE and CI coverage as nonlinearity increases.
Naive RMSE is large and mostly flat across severity levels for all four covariates, consistent with the constant-error design: the naive estimator is always biased, and the shape of the calibration function does not affect its magnitude.
Corrected RMSE is substantially smaller at all severity levels, while increasing with nonlinearity for covariates $X_1, X_4$.

\begin{table}[ht]
\centering
\setlength{\tabcolsep}{4pt}
\caption{Simulation results under nonlinear calibration (Study 2). Each cell gives the point estimate over its 95\% Monte Carlo CI.}
\label{tab:nonlinear}
\begin{tabular}{llrrrrr}
\toprule
& & \multicolumn{2}{c}{RMSE} & \multicolumn{3}{c}{95\% CI Coverage (\%)} \\
\cmidrule(lr){3-4} \cmidrule(lr){5-7}
Cov. & Severity & Naive & Corr. & Naive & Plug-in & Prop. \\
\midrule
\multirow{3}{*}{$X_1$} & Mild & \shortstack{0.290 \\ {\scriptsize [0.288, 0.292]}} & \shortstack{0.090 \\ {\scriptsize [0.086, 0.094]}} & \shortstack{0.0 \\ {\scriptsize [0.0, 0.0]}} & \shortstack{57.8 \\ {\scriptsize [53.5, 62.1]}} & \shortstack{95.6 \\ {\scriptsize [93.8, 97.4]}} \\
 & Moderate & \shortstack{0.288 \\ {\scriptsize [0.286, 0.290]}} & \shortstack{0.094 \\ {\scriptsize [0.090, 0.098]}} & \shortstack{0.0 \\ {\scriptsize [0.0, 0.0]}} & \shortstack{53.4 \\ {\scriptsize [49.0, 57.8]}} & \shortstack{91.6 \\ {\scriptsize [89.2, 94.0]}} \\
 & Severe & \shortstack{0.293 \\ {\scriptsize [0.290, 0.296]}} & \shortstack{0.114 \\ {\scriptsize [0.108, 0.119]}} & \shortstack{0.0 \\ {\scriptsize [0.0, 0.0]}} & \shortstack{35.8 \\ {\scriptsize [31.6, 40.0]}} & \shortstack{81.2 \\ {\scriptsize [77.8, 84.6]}} \\
\midrule
\multirow{3}{*}{$X_2$} & Mild & \shortstack{0.194 \\ {\scriptsize [0.192, 0.196]}} & \shortstack{0.074 \\ {\scriptsize [0.070, 0.078]}} & \shortstack{0.0 \\ {\scriptsize [0.0, 0.0]}} & \shortstack{71.4 \\ {\scriptsize [67.4, 75.4]}} & \shortstack{96.8 \\ {\scriptsize [95.3, 98.3]}} \\
 & Moderate & \shortstack{0.186 \\ {\scriptsize [0.184, 0.188]}} & \shortstack{0.066 \\ {\scriptsize [0.062, 0.070]}} & \shortstack{0.0 \\ {\scriptsize [0.0, 0.0]}} & \shortstack{80.4 \\ {\scriptsize [76.9, 83.9]}} & \shortstack{98.2 \\ {\scriptsize [97.0, 99.4]}} \\
 & Severe & \shortstack{0.176 \\ {\scriptsize [0.173, 0.178]}} & \shortstack{0.069 \\ {\scriptsize [0.065, 0.073]}} & \shortstack{0.0 \\ {\scriptsize [0.0, 0.0]}} & \shortstack{77.6 \\ {\scriptsize [73.9, 81.3]}} & \shortstack{97.2 \\ {\scriptsize [95.8, 98.6]}} \\
\midrule
\multirow{3}{*}{$X_3$} & Mild & \shortstack{0.239 \\ {\scriptsize [0.233, 0.245]}} & \shortstack{0.151 \\ {\scriptsize [0.142, 0.160]}} & \shortstack{5.8 \\ {\scriptsize [3.8, 7.8]}} & \shortstack{84.2 \\ {\scriptsize [81.0, 87.4]}} & \shortstack{88.0 \\ {\scriptsize [85.2, 90.8]}} \\
 & Moderate & \shortstack{0.246 \\ {\scriptsize [0.240, 0.252]}} & \shortstack{0.148 \\ {\scriptsize [0.139, 0.157]}} & \shortstack{5.0 \\ {\scriptsize [3.1, 6.9]}} & \shortstack{85.4 \\ {\scriptsize [82.3, 88.5]}} & \shortstack{88.0 \\ {\scriptsize [85.2, 90.8]}} \\
 & Severe & \shortstack{0.239 \\ {\scriptsize [0.234, 0.245]}} & \shortstack{0.150 \\ {\scriptsize [0.140, 0.160]}} & \shortstack{5.0 \\ {\scriptsize [3.1, 6.9]}} & \shortstack{83.4 \\ {\scriptsize [80.1, 86.7]}} & \shortstack{86.6 \\ {\scriptsize [83.6, 89.6]}} \\
\midrule
\multirow{3}{*}{$X_4$} & Mild & \shortstack{0.178 \\ {\scriptsize [0.173, 0.184]}} & \shortstack{0.144 \\ {\scriptsize [0.135, 0.152]}} & \shortstack{25.8 \\ {\scriptsize [22.0, 29.6]}} & \shortstack{85.8 \\ {\scriptsize [82.7, 88.9]}} & \shortstack{87.6 \\ {\scriptsize [84.7, 90.5]}} \\
 & Moderate & \shortstack{0.176 \\ {\scriptsize [0.170, 0.182]}} & \shortstack{0.151 \\ {\scriptsize [0.141, 0.160]}} & \shortstack{30.6 \\ {\scriptsize [26.6, 34.6]}} & \shortstack{84.6 \\ {\scriptsize [81.4, 87.8]}} & \shortstack{88.0 \\ {\scriptsize [85.2, 90.8]}} \\
 & Severe & \shortstack{0.176 \\ {\scriptsize [0.171, 0.182]}} & \shortstack{0.158 \\ {\scriptsize [0.147, 0.168]}} & \shortstack{28.4 \\ {\scriptsize [24.4, 32.4]}} & \shortstack{83.4 \\ {\scriptsize [80.1, 86.7]}} & \shortstack{85.6 \\ {\scriptsize [82.5, 88.7]}} \\
\bottomrule
\end{tabular}
\end{table}

Naive coverage is near zero for $X_1$ and $X_2$ at all severity levels, reflecting severe attenuation bias in those covariates.
Plug-in coverage improves substantially over naive but falls short of the nominal level, as it does not account for uncertainty in $\hat{\mathbf{B}}$.
Propagated CI coverage degrades with severity, maintaining nominal coverage only for $X_2$.
The binary covariates ($X_3$, $X_4$) exhibit below-nominal propagated coverage even at mild severity (88.0\% and 87.6\%), reflecting residual bias from the calibration function that the leading-order correction does not fully capture.

\section{Discussion}
\label{sec:discussion}

This paper develops a framework for quantifying and correcting the bias introduced by AI-extracted covariates in the Cox proportional hazards model, under the assumption that event times and indicators are accurately observed.
The central theoretical result (Theorem~\ref{thm:cov_bias}) yields a fully computable corrected estimator $\hat{\bfbeta}_{\mathrm{corr}} = (\hat{\mathbf{B}}^\top)^{-1}\hat{\bfbeta}^*$ that requires only the vendor-reported $\hat{\mathbf{B}}$ and the researcher's naive Cox fit.
Bias-adjusted confidence intervals provide inference that honestly reflects both sampling variability in the study data and estimation uncertainty in the vendor's validation sample.
Our framework cleanly separates vendor and researcher responsibilities: the vendor should report the calibration summaries in Table~\ref{tab:stats} and the researcher combines these with the output of any standard Cox software.
We encourage vendors to adopt Table~\ref{tab:stats} as a reporting standard for any AI-extracted covariate dataset intended for use in survival analysis, particularly as extensions to emerging and existing guidelines for real-world-evidence reporting.

R code for implementing our bias-correction methodology and for replicating the empirical studies described in Section~\ref{sec:sims} and Appendix~\ref{app:mimic} is available at \url{https://github.com/asondhi/cox-bias-cov}.

\paragraph{Limitations.}

The framework rests on several assumptions whose relaxation defines natural directions for future work.
First, we assume that the event time~$T$ and event indicator~$\Delta$ are observed exactly.
This is reasonable when outcomes come from administrative or registry sources that do not require language extraction, but it excludes settings where outcomes are themselves AI-extracted; for example, when the event of interest (disease recurrence, treatment discontinuation) must be identified from unstructured clinical notes.

Each error source enters the partial likelihood differently. 
Covariate error leaves every risk set correct and produces the linear attenuation of Theorem~\ref{thm:cov_bias}, which vanishes at $\bfbeta = \mathbf{0}$.
Misclassified $\Delta$ changes only which records count as events.
Events missed at a rate that does not depend on $\bfX$ or $t$ merely thin the event set and are absorbed into $\lambda_0$, so imperfect sensitivity costs precision but not consistency.
False events do bias $\bfbeta$: a censored record is labeled an event at the time its follow-up ends, not at a time generated by the failure hazard, so its covariate is effectively a draw from the \emph{unweighted} risk set at that time, whereas the true score subtracts the risk-set mean $\bar{\bfX}(\bfbeta, t)$ weighted by $\exp(\bfbeta^\top\bfX)$; the discrepancy is $-\Var(\bfX \mid T \ge t)\bfbeta$ to first order and therefore pulls the estimate toward zero.
Error in $T$ perturbs event ordering and risk-set membership together, so the partial likelihood is invariant to a common monotone distortion of the time scale but is biased by heterogeneous error.

When $T$ or $\Delta$ is subject to extraction error, additional bias terms arise from the interaction between outcome misclassification and the partial likelihood structure, and the non-differential error assumption requires a different form.
Extending the framework to handle errors in $T$, $\Delta$, or both is an important open problem; the joint treatment of covariate and outcome extraction error in the Cox model has not, to our knowledge, been addressed.

Second, the present framework addresses estimation of the full-population coefficient vector $\bfbeta$ and does not cover subpopulation analyses in which the analytic sample is selected on the basis of an AI-extracted variable.
Such analyses are common in practice: for example, a researcher may restrict attention to patients with a particular AI-extracted diagnosis or biomarker status, effectively conditioning on $X_j^*$ falling in a specific category.
Here, even if the extraction error is non-differential in the full population, the subpopulation defined by $X_j^* = 1$ will systematically differ from the subpopulation defined by $X_j = 1$ due to misclassification, and the induced bias depends on both the misclassification rates and the joint distribution of $X_j$ with the other covariates and with the outcome.

Finally, the bias formulas are leading-order approximations that rely on the linear calibration assumption~(A3).
As the simulations in Study~2 show, the correction remains effective under mild departures from linearity but degrades under severe nonlinearity, particularly for binary covariates.
Nonlinear extensions---for example, replacing the linear calibration model with a flexible nonparametric or semiparametric specification of $\E[\bfX \mid \bfXs]$---would broaden the applicability of the framework, though at the cost of requiring the vendor to report richer summaries.

\bibliographystyle{apalike}
\bibliography{references}

\newpage

\appendix
\section{Proof of Theorem 1}
\label{app:proofs}

\paragraph{Asymptotic regime and conventions.} All asymptotic bounds in this appendix are stated in the limit $\|\E[\boldsymbol{\Sigma}_{X|X^*}]\| \to 0$ (the accurate-extraction regime), with $\bfbeta \in \mathcal{B}$ held fixed per (A1). 
Under (A1), hidden constants in big-$O$ bounds depend on $\mathcal{B}$ and the covariate, hazard, and censoring distributions. 
Specifically, under the joint sub-Gaussianity in (A1)(ii), the moment generating function $M_X(\bfbeta) = \E[\exp(\bfbeta^\top\bfX)]$ is finite and continuous, hence uniformly bounded on the compact set $\mathcal{B}$, and the same holds for the tilted moments $\E[\exp(\bfbeta^\top\bfX)\ind(T \ge t)]$ and $\E[\|\bfX\|^k\exp(\bfbeta^\top\bfX)\ind(T \ge t)]$ uniformly on $\mathcal{B}\times[0,\tau]$; $s_X^{(0)}(\bfbeta, t)$ and $\Prob(T \ge t)$ are bounded below uniformly on $[0, \tau]$ by (A1)(iii). 
Expectations of $\|\bfX\|^k$ and $\|\boldsymbol{\eta}(\bfX)\|^k$ are finite; and, since under (A3) the calibration residual $\boldsymbol{\epsilon}_m = \bfX - \mathbf{m}(\bfXs)$ is a linear combination of the jointly sub-Gaussian $(\bfX, \bfXs)$, it is itself sub-Gaussian. 

\begin{proof}
We proceed in three steps.

\emph{Step 1: Equivalence of naive and regression calibration estimators.}
The \emph{regression calibration (RC) estimator} replaces $\bfXs$ with $\mathbf{m}(\bfXs) = \mathbf{a} + \mathbf{B}\bfXs$ in the partial likelihood and maximizes over $\bfbeta$. The partial likelihood contribution at event time $T_i$ under RC is
\begin{equation*}
    \frac{\exp(\bfbeta^\top\mathbf{m}(\bfXs_i))}{\sum_{j \in \mathcal{R}(T_i)}\exp(\bfbeta^\top\mathbf{m}(\bfXs_j))}.
\end{equation*}
Expanding the exponent: $\bfbeta^\top\mathbf{m}(\bfXs_i) = \bfbeta^\top(\mathbf{a} + \mathbf{B}\bfXs_i) = \bfbeta^\top\mathbf{a} + (\mathbf{B}^\top\bfbeta)^\top\bfXs_i$. Since $\exp(\bfbeta^\top\mathbf{a})$ appears in every term of both numerator and denominator, it cancels:
\begin{equation*}
    \frac{\exp(\bfbeta^\top\mathbf{m}(\bfXs_i))}{\sum_{j \in \mathcal{R}(T_i)}\exp(\bfbeta^\top\mathbf{m}(\bfXs_j))} = \frac{\exp((\mathbf{B}^\top\bfbeta)^\top\bfXs_i)}{\sum_{j \in \mathcal{R}(T_i)}\exp((\mathbf{B}^\top\bfbeta)^\top\bfXs_j)}.
\end{equation*}
The right-hand side is the naive partial likelihood with covariates $\bfXs$ at coefficient $\mathbf{B}^\top\bfbeta$. Since the map $\bfbeta \mapsto \mathbf{B}^\top\bfbeta$ is a bijection (assuming $\mathbf{B}$ is invertible), the maximizers are related by
\begin{equation*}
    \hat{\bfbeta}^* = \mathbf{B}^\top\hat{\bfbeta}_{\mathrm{RC}}.
\end{equation*}
The bias problem therefore reduces to showing that $\hat{\bfbeta}_{\mathrm{RC}}$ approximately targets $\bfbeta$.

\emph{Step 2: Approximate consistency of the RC estimator.}
The population RC score at $\bfbeta$ is
\begin{equation*}
    \mathbf{u}_{\mathrm{RC}}(\bfbeta) = \E\Bigl[\Delta\bigl\{\mathbf{m}(\bfXs) - \bar{\mathbf{m}}(\bfbeta, T)\bigr\}\Bigr],
\end{equation*}
where
\begin{equation*}
    \bar{\mathbf{m}}(\bfbeta, t) = \frac{\E[\mathbf{m}(\bfXs)\exp(\bfbeta^\top\mathbf{m}(\bfXs))\ind(T \ge t)]}{\E[\exp(\bfbeta^\top\mathbf{m}(\bfXs))\ind(T \ge t)]}
\end{equation*}
is the risk-set weighted average of the calibrated covariates. We need to show that $\mathbf{u}_{\mathrm{RC}}(\bfbeta) \approx \mathbf{0}$.

Add and subtract $\bfX$ and $\bar{\bfX}(\bfbeta, T)$ (the true risk-set mean) to decompose:
\begin{align*}
    \mathbf{u}_{\mathrm{RC}}(\bfbeta) &= \underbrace{\E\bigl[\Delta\bigl\{\mathbf{m}(\bfXs) - \bfX\bigr\}\bigr]}_{\text{Term A: double-projection}}\\
    &\quad + \underbrace{\E\bigl[\Delta\bigl\{\bfX - \bar{\bfX}(\bfbeta, T)\bigr\}\bigr]}_{\text{Term B: true Cox score}}\\
    &\quad + \underbrace{\E\bigl[\Delta\bigl\{\bar{\bfX}(\bfbeta, T) - \bar{\mathbf{m}}(\bfbeta, T)\bigr\}\bigr]}_{\text{Term C: risk-set weight discrepancy}}.
\end{align*}
Term B is the true Cox score $\mathbf{u}(\bfbeta) = \mathbf{0}$ by definition. We bound Terms A and C.

\emph{Step 2a: The double-projection residual.}
Under non-differential error (A2), $\bfXs \perp (\Delta, T) \mid \bfX$, so for any function $h(\bfXs)$:
\begin{equation*}
    \E[\Delta \cdot h(\bfXs)] = \E\bigl[\E[\Delta \mid \bfX]\cdot\E[h(\bfXs) \mid \bfX]\bigr].
\end{equation*}
Applying with $h(\bfXs) = \mathbf{m}(\bfXs) - \bfX$:
\begin{equation*}
    \E\bigl[\Delta\bigl\{\mathbf{m}(\bfXs) - \bfX\bigr\}\bigr] = \E\bigl[\E[\Delta \mid \bfX]\cdot\bigl(\E[\mathbf{m}(\bfXs) \mid \bfX] - \bfX\bigr)\bigr].
\end{equation*}
Define the \emph{double-projection residual}
\begin{equation}
    \boldsymbol{\eta}(\bfX) := \E[\mathbf{m}(\bfXs) \mid \bfX] - \bfX.
    \label{eq:double_proj}
\end{equation}

This is the discrepancy between $\bfX$ and its double projection $\E[\E[\bfX \mid \bfXs] \mid \bfX]$. By the law of total expectation, $\E[\boldsymbol{\eta}(\bfX)] = \E[\E[\bfX \mid \bfXs]] - \E[\bfX] = \mathbf{0}$.

Under (A3), $\mathbf{m}(\mathbf{x}^*) = \mathbf{a} + \mathbf{B}\mathbf{x}^*$ with $\mathbf{a} = \boldsymbol{\mu}_X - \mathbf{B}\boldsymbol{\mu}_{X^*}$:
\begin{align*}
    \E[\mathbf{m}(\bfXs) \mid \bfX] &= \mathbf{a} + \mathbf{B}\,\E[\bfXs \mid \bfX] \notag\\
    &= \boldsymbol{\mu}_X - \mathbf{B}\boldsymbol{\mu}_{X^*} + \mathbf{B}\,\E[\bfXs \mid \bfX],
\end{align*}
so
\begin{equation*}
    \boldsymbol{\eta}(\bfX) = \mathbf{B}\bigl(\E[\bfXs \mid \bfX] - \boldsymbol{\mu}_{X^*}\bigr) - (\bfX - \boldsymbol{\mu}_X).
\end{equation*}

We now show that $\Var(\boldsymbol{\eta}(\bfX))$ is controlled by $\E[\boldsymbol{\Sigma}_{X|X^*}]$. 
Define the calibration residual $\boldsymbol{\epsilon}_m := \bfX - \mathbf{m}(\bfXs)$ (the difference between the truth and the calibration prediction). 
Note that 
$$
\E[\boldsymbol{\epsilon}_m] = \E[\bfX] - \E[\mathbf{m}(\bfXs)] = \E[\bfX] - \E[\E[\bfX \mid \bfXs]] = \mathbf{0}
$$ 
and 
$$
\E[\boldsymbol{\epsilon}_m \mid \bfX] = \bfX - \E[\mathbf{m}(\bfXs) \mid \bfX] = -\boldsymbol{\eta}(\bfX).
$$

Then, the law of total variance gives 
\begin{align*}
\Var(\boldsymbol{\epsilon}_m) 
&= \Var(\E[\boldsymbol{\epsilon}_m \mid \bfX]) + \E[\Var(\boldsymbol{\epsilon}_m \mid \bfX)] \\
&= \Var(-\boldsymbol{\eta}(\bfX)) + \E[\Var(\boldsymbol{\epsilon}_m \mid \bfX)]
\end{align*}
Taking traces and using $\E[\boldsymbol{\epsilon}_m] = \mathbf{0}$:
\begin{equation*}
    \underbrace{\E[\|\boldsymbol{\epsilon}_m\|^2]}_{\mathrm{tr}(\Var(\boldsymbol{\epsilon}_m))} = \underbrace{\E[\|\boldsymbol{\eta}(\bfX)\|^2]}_{\mathrm{tr}(\Var(\boldsymbol{\eta}))} + \underbrace{\E[\E[\|\boldsymbol{\epsilon}_m - \E[\boldsymbol{\epsilon}_m \mid \bfX]\|^2 \mid \bfX]]}_{\mathrm{tr}(\E[\Var(\boldsymbol{\epsilon}_m \mid \bfX)]) \;\ge\; 0},
\end{equation*}
where the first equality on the left uses $\mathrm{tr}(\Var(Y)) = \E[\|Y\|^2]$ when $\E[Y] = \mathbf{0}$, and similarly for $\boldsymbol{\eta}$.

The left hand side equals $\mathrm{tr}(\E[\boldsymbol{\Sigma}_{X|X^*}])$ since $\E[\|\bfX - \mathbf{m}(\bfXs)\|^2] = \mathrm{tr}(\E[(\bfX - \mathbf{m}(\bfXs))(\bfX - \mathbf{m}(\bfXs))^\top])  = \mathrm{tr}(\E[\boldsymbol{\Sigma}_{X|X^*}])$. 
Dropping the non-negative third term:
\begin{equation}
    \mathrm{tr}(\Var(\boldsymbol{\eta}(\bfX))) \le \mathrm{tr}(\E[\boldsymbol{\Sigma}_{X|X^*}]).
    \label{eq:eta_bound}
\end{equation}

Thus, the variance of the double-projection residual is bounded by the calibration residual variance.

Since $\E[\boldsymbol{\eta}(\bfX)] = \mathbf{0}$, Term A becomes a covariance:
\begin{equation}
    \text{Term A} = \Cov\bigl(\E[\Delta \mid \bfX],\;\boldsymbol{\eta}(\bfX)\bigr).
    \label{eq:term_a_cov}
\end{equation}
We bound Terms A and C by decomposing each into a $\bfbeta$-independent censoring contribution and a $\bfbeta$-dependent hazard contribution.

\emph{Step 2b: Decomposing Term A.}
Under the Cox model with censoring survival $G(t \mid \bfX) = \Prob(C \ge t \mid \bfX)$,
\begin{equation*}
    \E[\Delta \mid \bfX] = \int_0^\tau \lambda_0(t)\exp(\bfbeta^\top\bfX)\,S_0(t)^{\exp(\bfbeta^\top\bfX)}\,G(t \mid \bfX)\,dt.
\end{equation*}
When censoring depends on covariates, $\E[\Delta \mid \bfX]$ depends on $\bfX$ through both the event hazard ($\bfbeta$-dependent) and the censoring mechanism ($\bfbeta$-independent). Write
\begin{equation*}
    \E[\Delta \mid \bfX] = \underbrace{h(\bfX)}_{\text{censoring contribution}} + \underbrace{g(\bfX, \bfbeta)}_{\text{hazard contribution}},
\end{equation*}
where $h(\bfX) := \E[\Delta \mid \bfX]\big|_{\bfbeta = \mathbf{0}} = \int_0^\tau \lambda_0(t)\,S_0(t)\,G(t \mid \bfX)\,dt$ is the event probability under the null model (which depends on $\bfX$ only through censoring), and $g(\bfX, \bfbeta) = \E[\Delta \mid \bfX] - h(\bfX)$ captures the effect of the covariates on the event hazard. By construction, $g(\bfX, \mathbf{0}) = 0$.
Moreover $\E[\Delta\mid\bfX]$ is smooth in $\bfbeta$ with gradient $\nabla_{\bfbeta}\E[\Delta\mid\bfX]\big|_{\bfbeta=\mathbf{0}} = \bfX\,\phi(\bfX)$, where $\phi(\bfX) = \int_0^\tau \lambda_0(t)\,S_0(t)\,\bigl(1 + \log S_0(t)\bigr)G(t\mid\bfX)\,dt$ is bounded on $[0,\tau]$ because the survival-damped integrand $u\,S_0(t)^u$ is bounded over $u \ge 0$ and $\log S_0(t)$ is bounded by (A1)(iii).
Hence, by a first-order expansion $g(\bfX, \bfbeta) = O(\|\bfbeta\|)$ in $L^2$ and $\Var(g(\bfX, \bfbeta)) = O(\|\bfbeta\|^2)$, with the constant governed by $\E\|\bfX\|^2 < \infty$.

Substituting into \eqref{eq:term_a_cov}:
\begin{equation*}
    \text{Term A} = \underbrace{\Cov(h(\bfX),\,\boldsymbol{\eta}(\bfX))}_{\text{A}_0\text{: censoring--calibration}} + \underbrace{\Cov(g(\bfX, \bfbeta),\,\boldsymbol{\eta}(\bfX))}_{\text{A}_\beta\text{: hazard--calibration}}.
\end{equation*}
The hazard--calibration term $\text{A}_\beta$ is $O(\|\bfbeta\| \cdot \|\E[\boldsymbol{\Sigma}_{X|X^*}]\|^{1/2})$ by Cauchy--Schwarz and \eqref{eq:eta_bound}, since $\sqrt{\Var(g)} = O(\|\bfbeta\|)$ and $\sqrt{\mathrm{tr}(\Var(\boldsymbol{\eta}))} \le \sqrt{\mathrm{tr}(\E[\boldsymbol{\Sigma}_{X|X^*}])}$. 
The censoring--calibration term $\text{A}_0 = \Cov(h(\bfX), \boldsymbol{\eta}(\bfX))$ does not involve $\bfbeta$ and is generally nonzero when censoring depends on covariates.

\emph{Step 2c: Decomposing Term C.}
A parallel decomposition applies to Term C. The risk-set weighted averages are:
\begin{align*}
    \bar{\bfX}(\bfbeta, t) &= \frac{s_X^{(1)}(\bfbeta, t)}{s_X^{(0)}(\bfbeta, t)}, &\quad s_X^{(k)}(\bfbeta, t) &= \E[\bfX^{\otimes k}\exp(\bfbeta^\top\bfX)\ind(T \ge t)],\\
    \bar{\mathbf{m}}(\bfbeta, t) &= \frac{s_m^{(1)}(\bfbeta, t)}{s_m^{(0)}(\bfbeta, t)}, &\quad s_m^{(k)}(\bfbeta, t) &= \E[\mathbf{m}(\bfXs)^{\otimes k}\exp(\bfbeta^\top\mathbf{m}(\bfXs))\ind(T \ge t)],
\end{align*}
and we define the risk-set discrepancy $\mathbf{D}(\bfbeta, t) := \bar{\bfX}(\bfbeta, t) - \bar{\mathbf{m}}(\bfbeta, t)$, so
\begin{equation}
    \text{Term C} = \E\bigl[\Delta\,\mathbf{D}(\bfbeta, T)\bigr].
    \label{eq:TC_def}
\end{equation}

Recall the calibration residual $\boldsymbol{\epsilon}_m = \bfX - \mathbf{m}(\bfXs)$ from Step~2a, so that $\mathbf{m}(\bfXs) = \bfX - \boldsymbol{\epsilon}_m$ and by a Taylor expansion:
\begin{equation*}
    \exp(\bfbeta^\top\mathbf{m}(\bfXs)) = \exp(\bfbeta^\top\bfX)\cdot\exp(-\bfbeta^\top\boldsymbol{\epsilon}_m) = \exp(\bfbeta^\top\bfX)\bigl[1 - \bfbeta^\top\boldsymbol{\epsilon}_m\bigr] + O(\|\bfbeta\|^2\|\boldsymbol{\epsilon}_m\|^2).
\end{equation*}

Substituting into $s_m^{(0)}$ and $s_m^{(1)}$, and using $\boldsymbol{\epsilon}_m \perp T \mid \bfX$ from (A2) together with $\E[\boldsymbol{\epsilon}_m \mid \bfX] = -\boldsymbol{\eta}(\bfX)$:
\begin{align}
    s_m^{(0)}(\bfbeta, t) &= s_X^{(0)}(\bfbeta, t) + \E\bigl[\exp(\bfbeta^\top\bfX)\,\bfbeta^\top\boldsymbol{\eta}(\bfX)\,\ind(T \ge t)\bigr] + R_0(\bfbeta, t), \label{eq:smk_expand0}\\
    s_m^{(1)}(\bfbeta, t) &= \E\bigl[(\bfX - \boldsymbol{\epsilon}_m)\exp(\bfbeta^\top\bfX)(1 - \bfbeta^\top\boldsymbol{\epsilon}_m)\ind(T \ge t)\bigr] + R_1(\bfbeta, t) \notag\\
    &= s_X^{(1)}(\bfbeta, t) + \E\bigl[\bfX\exp(\bfbeta^\top\bfX)\,\bfbeta^\top\boldsymbol{\eta}(\bfX)\,\ind(T \ge t)\bigr] \notag\\
    &\quad + \E\bigl[\boldsymbol{\eta}(\bfX)\exp(\bfbeta^\top\bfX)\,\ind(T \ge t)\bigr] + R_1(\bfbeta, t), \label{eq:smk_expand1}
\end{align}
where $\|R_0\|, \|R_1\| = O(\|\bfbeta\|^2\,\|\E[\boldsymbol{\Sigma}_{X|X^*}]\|)$.

Write $s_m^{(k)} = s_X^{(k)} + \Delta_k$, so that from \eqref{eq:smk_expand0}--\eqref{eq:smk_expand1}:
\begin{align}
    \Delta_0(\bfbeta, t) &= \E\bigl[\exp(\bfbeta^\top\bfX)\,\bfbeta^\top\boldsymbol{\eta}(\bfX)\,\ind(T \ge t)\bigr] + R_0(\bfbeta, t), \label{eq:Delta0_explicit}\\
    \Delta_1(\bfbeta, t) &= \underbrace{\E\bigl[\boldsymbol{\eta}(\bfX)\exp(\bfbeta^\top\bfX)\,\ind(T \ge t)\bigr]}_{=:\; \Delta_1^{\mathrm{(a)}}(\bfbeta, t)} + \underbrace{\E\bigl[\bfX\exp(\bfbeta^\top\bfX)\,\bfbeta^\top\boldsymbol{\eta}(\bfX)\,\ind(T \ge t)\bigr]}_{=:\; \Delta_1^{\mathrm{(b)}}(\bfbeta, t)} + R_1(\bfbeta, t). \label{eq:Delta1_explicit}
\end{align}

Note that $\|\E[\boldsymbol{\eta}(\bfX)\,\omega(\bfX, t)]\| \le \E[\|\boldsymbol{\eta}(\bfX)\|\cdot|\omega(\bfX, t)|]$ is bounded via Cauchy--Schwarz by $\sqrt{\E[\|\boldsymbol{\eta}\|^2]}\cdot\sqrt{\E[\omega^2]}$, and $\sqrt{\E[\|\boldsymbol{\eta}(\bfX)\|^2]} = O(\|\E[\boldsymbol{\Sigma}_{X|X^*}]\|^{1/2})$ by \eqref{eq:eta_bound}. Applying this:
\begin{enumerate}
    \item[(i)] \emph{$\Delta_0$ carries an explicit $\bfbeta^\top\boldsymbol{\eta}$ factor.} From \eqref{eq:Delta0_explicit}, $\Delta_0(\bfbeta, t) = \E[\exp(\bfbeta^\top\bfX)\,\bfbeta^\top\boldsymbol{\eta}(\bfX)\,\ind(T \ge t)] + R_0$. The leading term is bounded by $\|\bfbeta\|\cdot\E[\exp(\bfbeta^\top\bfX)\,\|\boldsymbol{\eta}(\bfX)\|\,\ind(T \ge t)] \le \|\bfbeta\|\cdot\sqrt{\E[\exp(2\bfbeta^\top\bfX)\ind(T \ge t)]}\cdot\sqrt{\E[\|\boldsymbol{\eta}\|^2]}$ by Cauchy--Schwarz, and $R_0 = O(\|\bfbeta\|^2\,\|\E[\boldsymbol{\Sigma}_{X|X^*}]\|)$ is strictly subdominant. Therefore
    \begin{equation}
        \|\Delta_0(\bfbeta, t)\| = O\bigl(\|\bfbeta\|\cdot\|\E[\boldsymbol{\Sigma}_{X|X^*}]\|^{1/2}\bigr),
        \label{eq:Delta0_order}
    \end{equation}
    and in particular $\Delta_0(\mathbf{0}, t) = 0$.

    \item[(ii)] \emph{$\Delta_1$ has a nonzero $\bfbeta$-independent piece.} From \eqref{eq:Delta1_explicit}, $\Delta_1^{\mathrm{(a)}}(\mathbf{0}, t) = \E[\boldsymbol{\eta}(\bfX)\ind(T \ge t)] \ne \mathbf{0}$ in general, with $\|\Delta_1^{\mathrm{(a)}}(\mathbf{0}, t)\| = O(\|\E[\boldsymbol{\Sigma}_{X|X^*}]\|^{1/2})$ by Cauchy--Schwarz and \eqref{eq:eta_bound}. The same bound holds for $\Delta_1^{\mathrm{(a)}}(\bfbeta, t)$ at any $\bfbeta$ (using boundedness of $\exp(\bfbeta^\top\bfX)$ on the support). The piece $\Delta_1^{\mathrm{(b)}}$ carries an explicit $\bfbeta^\top$ factor, so $\|\Delta_1^{\mathrm{(b)}}(\bfbeta, t)\| = O(\|\bfbeta\|\cdot\|\E[\boldsymbol{\Sigma}_{X|X^*}]\|^{1/2})$. Combining:
    \begin{equation}
        \|\Delta_1(\bfbeta, t)\| = O\bigl(\|\E[\boldsymbol{\Sigma}_{X|X^*}]\|^{1/2}\bigr), \qquad \|\Delta_1(\bfbeta, t) - \Delta_1(\mathbf{0}, t)\| = O\bigl(\|\bfbeta\|\cdot\|\E[\boldsymbol{\Sigma}_{X|X^*}]\|^{1/2}\bigr).
        \label{eq:Delta1_order}
    \end{equation}
\end{enumerate}

At $\bfbeta = \mathbf{0}$, $\Delta_0(\mathbf{0}, t) = 0$ but $\Delta_1(\mathbf{0}, t) \ne \mathbf{0}$. So the risk-set discrepancy at $\bfbeta = \mathbf{0}$ is
\begin{equation}
    \mathbf{D}(\mathbf{0}, t) = \bar{\bfX}(\mathbf{0}, t) - \bar{\mathbf{m}}(\mathbf{0}, t) = -\frac{\Delta_1(\mathbf{0}, t)}{s_X^{(0)}(\mathbf{0}, t)} = -\frac{\E[\boldsymbol{\eta}(\bfX)\ind(T \ge t)]}{\Prob(T \ge t)},
    \label{eq:D_zero}
\end{equation}
which is $\bfbeta$-independent, nonzero in general, and bounded by $\|\mathbf{D}(\mathbf{0}, t)\| = O(\|\E[\boldsymbol{\Sigma}_{X|X^*}]\|^{1/2})$ via Cauchy--Schwarz and \eqref{eq:eta_bound}.

Take $ \bar{\mathbf{m}}(\bfbeta, t) = \cfrac{s_X^{(1)} + \Delta_1}{s_X^{(0)} + \Delta_0}$.
The ratio expansion uses $1/(a + x) = 1/a - x/a^2 + x^2/a^3 + O(x^3)$ applied to $1/(s_X^{(0)}(\bfbeta,t) + \Delta_0(\bfbeta,t))$:
\begin{align*}
    \bar{\mathbf{m}}(\bfbeta, t) &= \bigl(s_X^{(1)} + \Delta_1\bigr)\cdot\left[\frac{1}{s_X^{(0)}} - \frac{\Delta_0}{(s_X^{(0)})^2} + O\bigl(\Delta_0^2/(s_X^{(0)})^3\bigr)\right]\\
    &= \bar{\bfX}(\bfbeta, t) + \frac{\Delta_1(\bfbeta, t) - \bar{\bfX}(\bfbeta, t)\,\Delta_0(\bfbeta, t)}{s_X^{(0)}(\bfbeta, t)} \;\underbrace{- \frac{\Delta_1(\bfbeta, t)\,\Delta_0(\bfbeta, t)}{(s_X^{(0)}(\bfbeta, t))^2} + O\bigl(\|\Delta_0\|^2\bigr)}_{\text{second-order remainder}},
\end{align*}
so
\begin{equation*}
    \mathbf{D}(\bfbeta, t) = -\frac{\Delta_1(\bfbeta, t) - \bar{\bfX}(\bfbeta, t)\,\Delta_0(\bfbeta, t)}{s_X^{(0)}(\bfbeta, t)} + \frac{\Delta_1(\bfbeta, t)\,\Delta_0(\bfbeta, t)}{(s_X^{(0)}(\bfbeta, t))^2} + O(\|\Delta_0\|^2).
\end{equation*}
The second-order remainder contains a $\Delta_1\Delta_0$ cross term and a $\Delta_0^2$ piece. Using \eqref{eq:Delta0_order} and \eqref{eq:Delta1_order}:
\begin{align*}
    \|\Delta_1\Delta_0/(s_X^{(0)})^2\| &= O\bigl(\|\bfbeta\|\cdot\|\E[\boldsymbol{\Sigma}_{X|X^*}]\|\bigr),\\
    \|\Delta_0^2/(s_X^{(0)})^3\| &= O\bigl(\|\bfbeta\|^2\cdot\|\E[\boldsymbol{\Sigma}_{X|X^*}]\|\bigr).
\end{align*}
Both are subdominant to the leading-order bound $O(\|\bfbeta\|\cdot\|\E[\boldsymbol{\Sigma}_{X|X^*}]\|^{1/2})$ established below.

Subtracting $\mathbf{D}(\mathbf{0}, t) = -\Delta_1(\mathbf{0}, t)/\Prob(T \ge t)$ from $\mathbf{D}(\bfbeta, t)$:
\begin{align}
    \mathbf{D}(\bfbeta, t) - \mathbf{D}(\mathbf{0}, t) &= \underbrace{\Delta_1(\mathbf{0}, t)\left[\frac{1}{\Prob(T\ge t)} - \frac{1}{s_X^{(0)}(\bfbeta, t)}\right]}_{\text{(I)}} \notag\\
    &\quad - \underbrace{\frac{\Delta_1(\bfbeta, t) - \Delta_1(\mathbf{0}, t)}{s_X^{(0)}(\bfbeta, t)}}_{\text{(II)}} + \underbrace{\frac{\bar{\bfX}(\bfbeta, t)\,\Delta_0(\bfbeta, t)}{s_X^{(0)}(\bfbeta, t)}}_{\text{(III)}} + \text{(subdominant)}. \label{eq:D_diff}
\end{align}
We bound each piece. For (I): $s_X^{(0)}(\bfbeta, t) - \Prob(T \ge t) = \E[(\exp(\bfbeta^\top\bfX) - 1)\ind(T \ge t)] = O(\|\bfbeta\|)$ by Taylor expansion, so the bracket is $O(\|\bfbeta\|)$; combined with $\|\Delta_1(\mathbf{0}, t)\| = O(\|\E[\boldsymbol{\Sigma}_{X|X^*}]\|^{1/2})$ from \eqref{eq:Delta1_order}, this gives $\|(\text{I})\| = O(\|\bfbeta\|\cdot\|\E[\boldsymbol{\Sigma}_{X|X^*}]\|^{1/2})$.

For (II): the second bound in \eqref{eq:Delta1_order} gives directly $\|\Delta_1(\bfbeta, t) - \Delta_1(\mathbf{0}, t)\| = O(\|\bfbeta\|\cdot\|\E[\boldsymbol{\Sigma}_{X|X^*}]\|^{1/2})$, and $s_X^{(0)}(\bfbeta, t)$ is bounded below by regularity, so $\|(\text{II})\| = O(\|\bfbeta\|\cdot\|\E[\boldsymbol{\Sigma}_{X|X^*}]\|^{1/2})$.

For (III): \eqref{eq:Delta0_order} gives $\|\Delta_0(\bfbeta, t)\| = O(\|\bfbeta\|\cdot\|\E[\boldsymbol{\Sigma}_{X|X^*}]\|^{1/2})$, and $\|\bar{\bfX}(\bfbeta, t)\|$ is bounded on the support, so $\|(\text{III})\| = O(\|\bfbeta\|\cdot\|\E[\boldsymbol{\Sigma}_{X|X^*}]\|^{1/2})$.

Combining:
\begin{equation}
    \mathbf{D}(\bfbeta, t) = \mathbf{D}(\mathbf{0}, t) + \boldsymbol{\rho}(\bfbeta, t), \qquad \|\boldsymbol{\rho}(\bfbeta, t)\| = O\bigl(\|\bfbeta\|\cdot\|\E[\boldsymbol{\Sigma}_{X|X^*}]\|^{1/2}\bigr),
    \label{eq:D_bound}
\end{equation}
so $\mathbf{D}(\bfbeta, t)$ splits cleanly into a $\bfbeta$-independent piece $\mathbf{D}(\mathbf{0}, t)$ of size $O(\|\E[\boldsymbol{\Sigma}_{X|X^*}]\|^{1/2})$ and a $\bfbeta$-dependent remainder $\boldsymbol{\rho}(\bfbeta, t)$ of size $O(\|\bfbeta\|\cdot\|\E[\boldsymbol{\Sigma}_{X|X^*}]\|^{1/2})$.

\emph{Step 2c(ii): Decomposing Term C via the censoring/hazard split.} Substituting \eqref{eq:D_bound} into \eqref{eq:TC_def} and using iterated expectations with $\E[\Delta \mid \bfX] = h(\bfX) + g(\bfX, \bfbeta)$ from Step~2b:
\begin{align*}
    \text{Term C} &= \E\bigl[\Delta\,\mathbf{D}(\mathbf{0}, T)\bigr] + \E\bigl[\Delta\,\boldsymbol{\rho}(\bfbeta, T)\bigr]\\
    &= \underbrace{\E\bigl[h(\bfX)\,\mathbf{D}(\mathbf{0}, T)\bigr]}_{=:\;\text{C}_0} + \underbrace{\E\bigl[g(\bfX, \bfbeta)\,\mathbf{D}(\mathbf{0}, T)\bigr] + \E\bigl[\Delta\,\boldsymbol{\rho}(\bfbeta, T)\bigr]}_{=:\;\text{C}_\beta}.
\end{align*}
By construction, $\text{C}_0$ is $\bfbeta$-independent (since $\mathbf{D}(\mathbf{0}, t)$ depends only on baseline risk-set composition, not on $\bfbeta$). The two terms defining $\text{C}_\beta$ are each $O(\|\bfbeta\|\cdot\|\E[\boldsymbol{\Sigma}_{X|X^*}]\|^{1/2})$: the first by Cauchy--Schwarz with $\sqrt{\Var(g(\bfX, \bfbeta))} = O(\|\bfbeta\|)$ and $\sup_t\|\mathbf{D}(\mathbf{0}, t)\|$ controlled by $\|\E[\boldsymbol{\Sigma}_{X|X^*}]\|^{1/2}$ via \eqref{eq:eta_bound} applied at $\bfbeta = \mathbf{0}$; the second directly from \eqref{eq:D_bound} since $|\Delta| \le 1$. Therefore
\begin{equation*}
    \|\text{C}_\beta\| = O\bigl(\|\bfbeta\|\cdot\|\E[\boldsymbol{\Sigma}_{X|X^*}]\|^{1/2}\bigr),
\end{equation*}
matching the rate of $\text{A}_\beta$ from Step~2b.

\emph{Step 2d: Cancellation of censoring cross-terms.}
At $\bfbeta = \mathbf{0}$, $\mathbf{u}_{\mathrm{RC}}(\mathbf{0}) = \text{A}_0 + \text{C}_0$ (since $\mathbf{u}(\bfbeta) = \mathbf{0}$ and $\text{A}_\beta, \text{C}_\beta$ vanish). A martingale argument shows $\mathbf{u}_{\mathrm{RC}}(\mathbf{0}) = \mathbf{0}$: at $\bfbeta = \mathbf{0}$, the RC score is $\E[\int_0^\tau \{\mathbf{m}(\bfXs) - \bar{\mathbf{m}}(\mathbf{0}, t)\}\,dM(t)]$, where $M(t) = N(t) - \int_0^t Y(s)\lambda_0(s)\,ds$ is a martingale under the true null model. Since $\mathbf{m}(\bfXs) - \bar{\mathbf{m}}(\mathbf{0}, t)$ is predictable, this integral has zero expectation. Therefore $\text{A}_0 + \text{C}_0 = \mathbf{0}$: the censoring cross-terms from Terms A and C cancel exactly.

The surviving contributions are the $\bfbeta$-dependent parts $\text{A}_\beta$ and $\text{C}_\beta$, both $O(\|\bfbeta\| \cdot \|\E[\boldsymbol{\Sigma}_{X|X^*}]\|^{1/2})$. Therefore
\begin{equation}
    \|\mathbf{u}_{\mathrm{RC}}(\bfbeta)\| = \|\text{A}_\beta + \text{C}_\beta\| = O\bigl(\|\bfbeta\| \cdot \|\E[\boldsymbol{\Sigma}_{X|X^*}]\|^{1/2}\bigr).
    \label{eq:term_a_order}
\end{equation}
\emph{Step 2e: Taylor inversion of the score.}
From \eqref{eq:term_a_order}:
\begin{equation*}
    \mathbf{u}_{\mathrm{RC}}(\bfbeta) = O(\|\bfbeta\| \cdot \|\E[\boldsymbol{\Sigma}_{X|X^*}]\|^{1/2}).
\end{equation*}
Let $\bfbeta_{\mathrm{RC}}^*$ denote the population target of $\hat{\bfbeta}_{\mathrm{RC}}$ (where $\mathbf{u}_{\mathrm{RC}}(\bfbeta_{\mathrm{RC}}^*) = \mathbf{0}$). Taylor-expanding around $\bfbeta$:
\begin{equation*}
    \mathbf{0} = \mathbf{u}_{\mathrm{RC}}(\bfbeta_{\mathrm{RC}}^*) \approx \mathbf{u}_{\mathrm{RC}}(\bfbeta) + \frac{\partial \mathbf{u}_{\mathrm{RC}}}{\partial \bfbeta^\top}\bigg|_{\bfbeta}(\bfbeta_{\mathrm{RC}}^* - \bfbeta).
\end{equation*}
The derivative $\partial\mathbf{u}_{\mathrm{RC}}/\partial\bfbeta^\top|_{\bfbeta}$ is $-\boldsymbol{\Omega}_{\mathrm{RC}}$, the negative RC information matrix. Solving:
\begin{equation*}
    \bfbeta_{\mathrm{RC}}^* - \bfbeta = \boldsymbol{\Omega}_{\mathrm{RC}}^{-1}\,\mathbf{u}_{\mathrm{RC}}(\bfbeta).
\end{equation*}
This is the RC residual: $\|\boldsymbol{\Omega}_{\mathrm{RC}}^{-1}\mathbf{u}_{\mathrm{RC}}(\bfbeta)\| = O(\|\bfbeta\| \cdot \|\E[\boldsymbol{\Sigma}_{X|X^*}]\|^{1/2})$.

\emph{Step 3: Combining via the equivalence.}
From Step~2, $\bfbeta_{\mathrm{RC}}^* = \bfbeta + \boldsymbol{\Omega}_{\mathrm{RC}}^{-1}\mathbf{u}_{\mathrm{RC}}(\bfbeta)$. From Step~1, $\bfbeta^* = \mathbf{B}^\top\bfbeta_{\mathrm{RC}}^*$, so
\begin{equation*}
    \bfbeta^* = \mathbf{B}^\top\bfbeta + \mathbf{B}^\top\boldsymbol{\Omega}_{\mathrm{RC}}^{-1}\mathbf{u}_{\mathrm{RC}}(\bfbeta).
\end{equation*}
Solving for $\bfbeta$:
\begin{equation*}
    \bfbeta = \bfbeta^* + \bigl[(\mathbf{B}^\top)^{-1} - \mathbf{I}\bigr]\bfbeta^* - \boldsymbol{\Omega}_{\mathrm{RC}}^{-1}\mathbf{u}_{\mathrm{RC}}(\bfbeta),
\end{equation*}
which gives \eqref{eq:bias_unified}. Dropping the RC residual gives the leading-order approximation \eqref{eq:bias_leading}.
\end{proof}

\clearpage

\section{Coverage convergence of CIs with $n_v$}
\label{app:prop-ci-cvg}

To assess how the two tiers of confidence interval respond to the amount of validation data, we reran the cross-dependent error design of Study~1 while varying the validation-sample size $n_v$ over the grid $\{50, 100, 200, 300, 400, 800, 1600, 3200\}$ and holding the study-sample size fixed at $n_{\text{study}} = 1500$. 
The data-generating process is otherwise identical.
We report empirical coverage across methods and $n_v$ size in Figure~\ref{fig:valsize}.

\begin{figure}[ht!]
\centering
\includegraphics[width=0.85\textwidth]{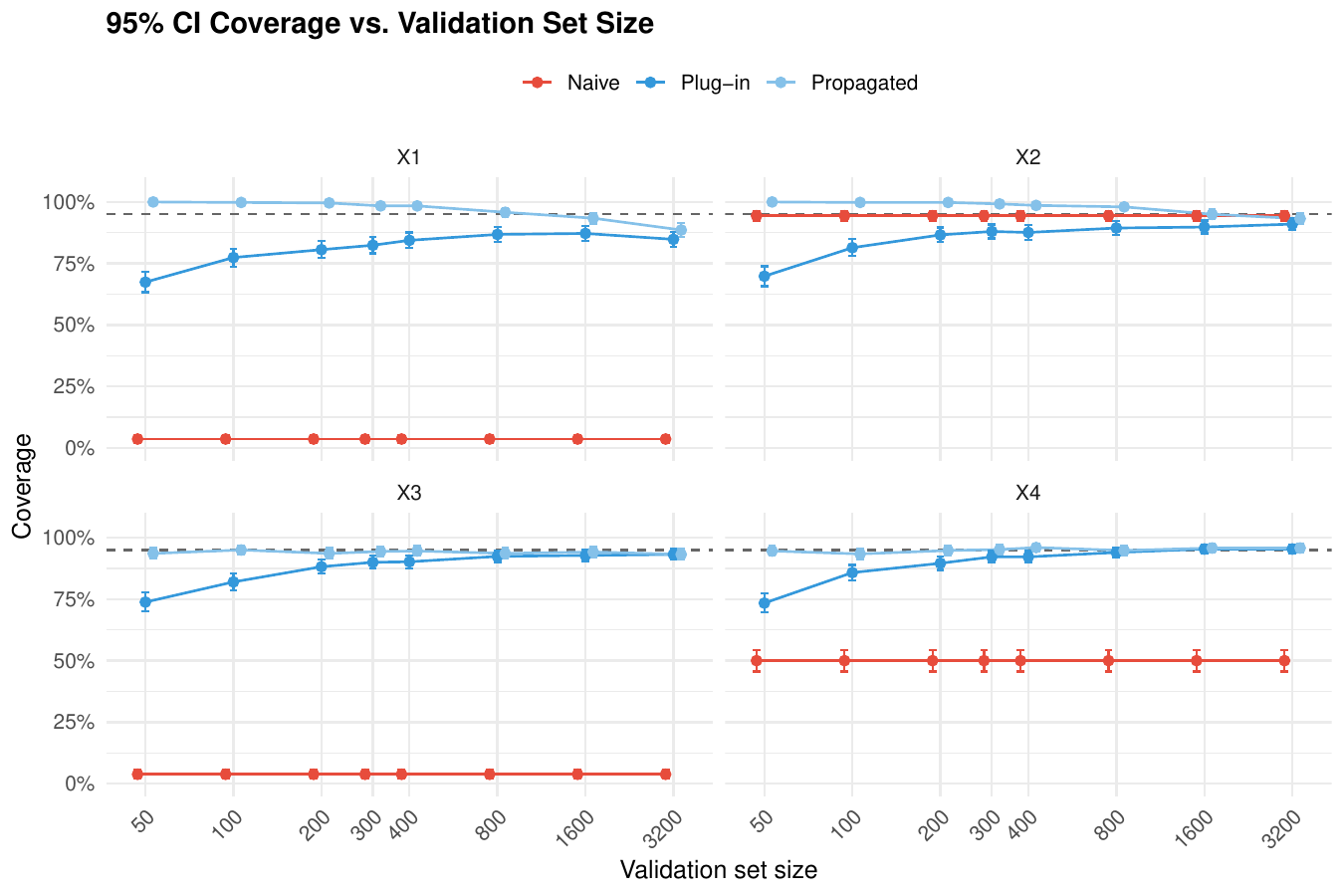}
\caption{95\% CI coverage vs $n_v$ for Study 1 data-generating process; error bars are 95\% Monte Carlo CIs.}
\label{fig:valsize}
\end{figure}

As expected, naive coverage is invariant to $n_v$, since it never uses the validation data: it remains badly miscalibrated for the covariates with appreciable attenuation and is near nominal for $X_2$ only because its naive bias is small. 
The plug-in interval undercovers when $n_v$ is small because it ignores the sampling variability of $\hat{\mathbf{B}}$, which is largest for small validation sets. 
Its coverage rises monotonically toward the nominal level as $n_v$ grows. 

The propagated intervals attain near-nominal coverage across the entire grid. 
At small $n_v$ it is conservative: for continuous covariates it
covers essentially $100\%$ of the time at $n_v = 50$ because the variance inflation that accounts for $\hat{\mathbf{B}}$ is largest exactly when the validation set is small; as $n_v$ increases this inflation shrinks and propagated coverage descends toward the nominal $95\%$. 
Consequently the gap between the plug-in and propagated intervals is widest at small $n_v$ and narrows as the two converge, quantifying the practical value of propagating calibration uncertainty precisely in the small $n_v$ regime. 
For the binary covariates the propagated interval tracks the nominal level closely throughout, remaining between roughly $93\%$ and $96\%$ for both $X_3$ and $X_4$.

\clearpage

\section{Semi-synthetic example using MIMIC-IV dataset}
\label{app:mimic}

Here, we analyze data derived from the Medical Information Mart for Intensive Care (MIMIC)-IV version 3.1 \citep{PhysioNet-mimiciv-3.1, johnson2023mimic, goldberger2000physiobank} on patients admitted to an intensive care unit.
We obtain baseline clinical factors and time to inpatient mortality, censoring surviving patients at the end of their stay. 
We aim to fit a Cox proportional hazards model with covariates contaminated by measurement error, and demonstrate the realistic use and performance of our bias correction.
The covariates analyzed are age at admission, sex, Sequential Organ Failure Assessment (SOFA) score, history of diabetes, and history of chronic obstructive pulmonary disease (COPD).
We leave age and sex as error-free measures, and simulate moderate error among the remaining covariates.
The SOFA score receives independent zero-mean additive Gaussian noise with standard deviation $\sigma_{\text{noise}} = 0.7\,\sigma_{\text{SOFA}}$; the corrupted values are rounded and clipped to the non-negative integers to respect SOFA's support. 
Diabetes and COPD status (binary variables) are corrupted by correlated bit-flips sharing a common latent factor $Z \sim \mathcal{N}(0,1)$: each has flip probability $p_k = \operatorname{logit}^{-1}(\alpha_k + \gamma_k |Z|)$, calibrated to a marginal flip rate of $\sim15\%$. 

We randomly sample a study dataset of $n_{\text{study}} = 5000$ and a disjoint validation dataset with $n_v = 500$ used to compute error summaries.
The validation set yields marginal error metrics visualized in Figure~\ref{fig:mimic-measurement-error}, and the calibration slope matrix used for bias correction:
$$
\hat{\mathbf{B}} = \bordermatrix{
              & \text{age} & \text{female} & \text{sofa} & \text{diab} & \text{copd} \cr
 \text{age}   &  1.0000 &  0.0000 &  0.0000 & 0.0000 & 0.0000 \cr
 \text{female}&  0.0000 &  1.0000 &  0.0000 & 0.0000 & 0.0000 \cr
 \text{sofa}  & -0.0007 & -0.1204 &  0.7430 & 0.1215 & 0.4018 \cr
 \text{diab}  &  0.0033 &  0.0140 &  0.0026 & 0.6351 & 0.0360 \cr
 \text{copd}  &  0.0036 &  0.0793 & -0.0009 & 0.0001 & 0.5470 \cr}
$$
While age and sex are error-free, their correlations with other covariates impact the resulting bias correction.
The condition number of $\hat{\mathbf{B}}$ is 2.28, indicating that the matrix inversion and bias correction should be stable.

\begin{figure}[htbp]
  \centering
  \includegraphics[width=\textwidth]{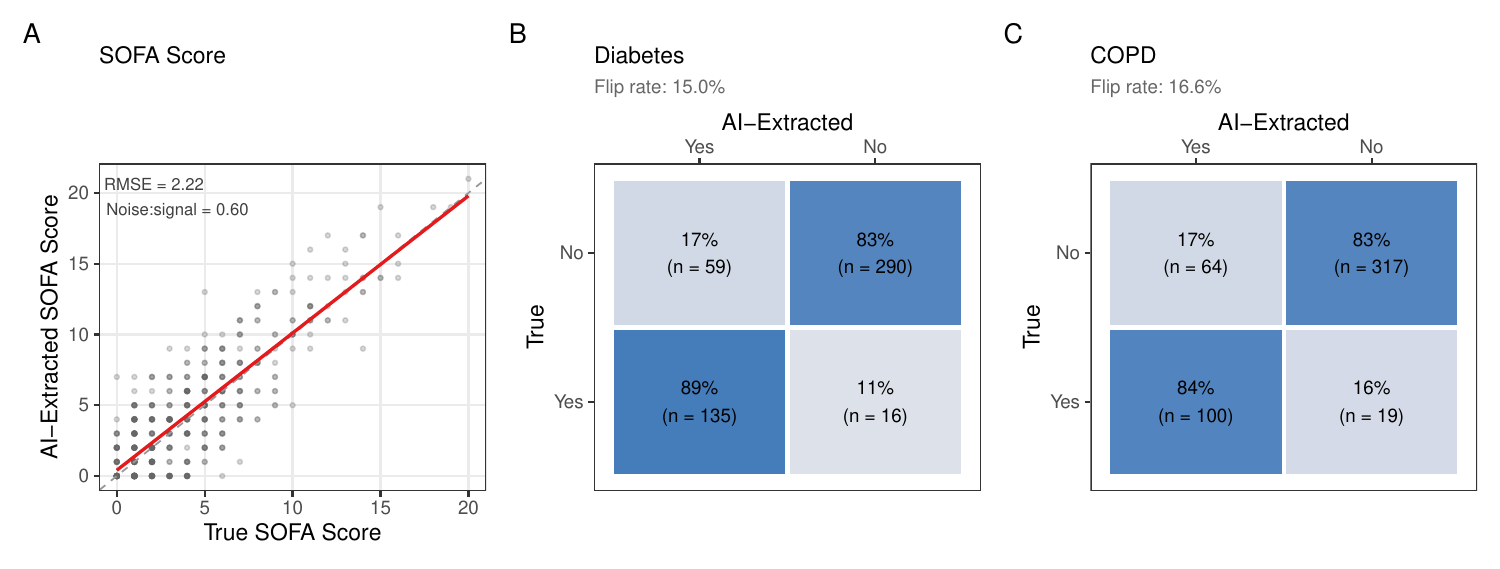}
  \caption{Marginal measurement error distributions for covariates in MIMIC-IV data experiment.}
\label{fig:mimic-measurement-error}
\end{figure}

The results of our analysis are visualized in Figure~\ref{fig:mimic-results}.
We compute the ground truth model parameters by fitting a Cox model on error-free covariates in study data.
The naive model fit results in moderate bias for HR estimates, most critically for SOFA score, where the naive confidence interval does not cover the ground truth parameter.
The corrected parameters uniformly reduce bias relative to the ground truth.
With the exception of SOFA score, the widths of plug-in confidence intervals are similar to those of the propagated CIs, indicating that the variance elements for these parameters are reasonably estimated from the validation set.

\begin{figure}[htbp]
  \centering
  \includegraphics[width=0.85\textwidth]{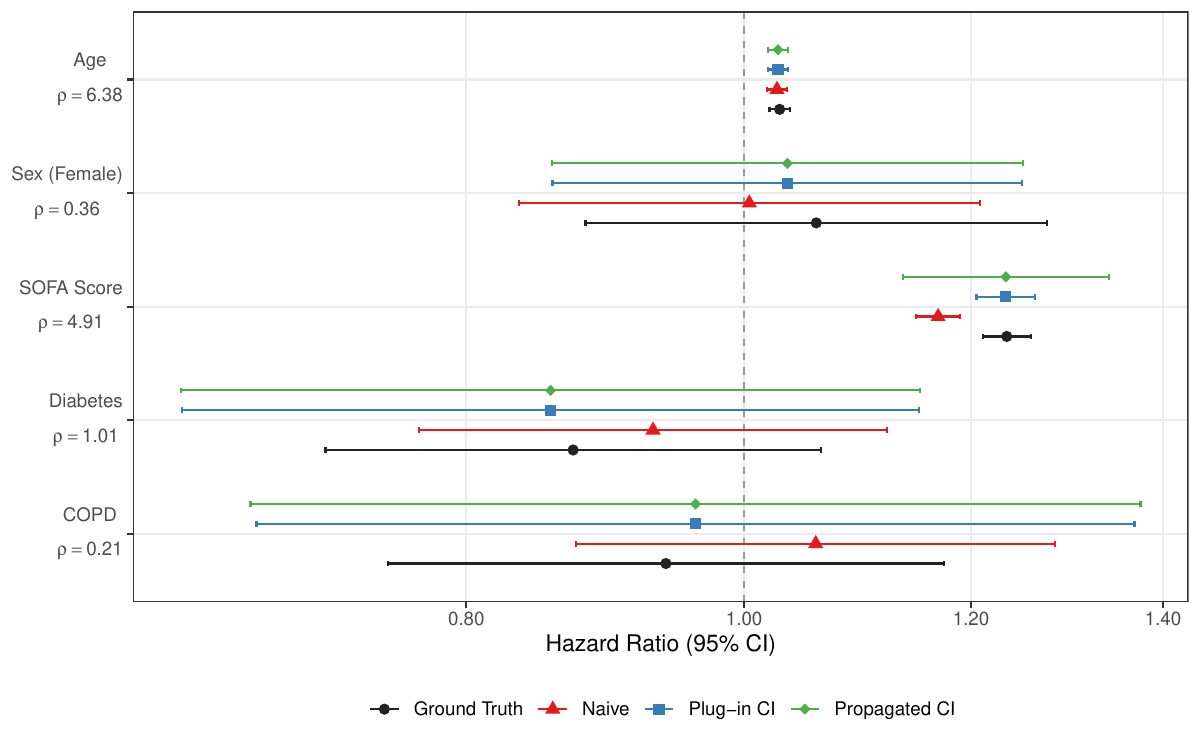}
  \caption{Estimated naive and bias-corrected hazard ratios with 95\% confidence intervals in MIMIC-IV data experiment.}
\label{fig:mimic-results}
\end{figure}

We also report the sensitivity diagnostic $\rho$ for each covariate.
We observe that the sex and COPD coefficients are least susceptible to bias due to the RC residual term, while those for age, SOFA score, and diabetes may be more sensitive.
While our ground truth comparisons in this example show reasonably low bias among all corrected estimates, in practice, a large $\rho$ value could indicate unreliability for larger extraction error settings.

\end{document}